\newtheorem{thm}{Theorem}[section]
\newtheorem{lem}[thm]{Lemma}
\newtheorem{cor}[thm]{Corollary}
\newtheorem{defn}[thm]{Definition}
\newtheorem{exmp}[thm]{Example}
\newtheorem{rem}[thm]{Remark}
\newtheorem{prob}[thm]{Problem}
\title{On the solutions of linear systems over additively idempotent semirings.}
\author{\'Alvaro Otero S\'anchez \footnote{Department of Mathematics of the University of Almer\'ia}, Daniel Camaz\'on \footnote{Department of Mathematics of the University of Almer\'ia} \footnote{The second author is partially supported by Ministerio de Ciencia e Innovaci\'on  PID2022-138906NB-C21.}, Juan Antonio L\'opez Ramos \footnote{Department of Mathematics of the University of Almer\'ia} \footnote{The third author is supported by Ministerio de Ciencia e Innovaci\'on PID2020-113552GB-I00 and FQM 0211 Junta de Andaluc\'ia.}}
\date{aos073@inlumine.ual.es}
\begin{document}
\maketitle

\begin{abstract}
The aim of this article is to solve the system $XA=Y$ where $A=(a_{ij})\in M_{m\times n}(S)$, $Y\in S^{m}$ and $X$ is an unknown vector of size $n$, being $S$ an additively idempotent semiring. If the system has solutions then we completely characterize its maximal one, and in the particular
case where $S$ is a generalized tropical semiring a complete characterization of its solutions is provided as well as an explicit bound of the computational cost associated to its computation. Finally, when $S$ is finite, we give a cryptographic application by presenting an attack to the key exchange protocol proposed by Maze, Monico and Rosenthal.
\end{abstract}

\section{Introduction}

A semiring $(S,+,\cdot,0,1)$ is an algebraic structure in which $(S,+)$ is a commutative monoid with an identity element $0$ and $(S,\cdot)$ is a monoid with an identity element $1$, being both internal operations connected by ring-like distributivity. The additive identity $0$ is multiplicatively absorbing, and $0\neq 1$ (see, e.g., the monograph \cite{Golan99} for an intensive treatment of this algebraic structure). Moreover, a semiring $(S,+,\cdot,0,1)$ is said to be additively idempotent if $x+x=x$ for all $x\in S$. Historically, the first notion of a semiring was due to Vandiver \cite{Vandiver34} in $1934$  and the interest in additively idempotent semirings arose in the $1950s$ through the observation that some problems in discrete optimization could be linearized over such structures (see, e.g., \cite{Cuninghame79}). The first work to make use of an algebra over an idempotent ring (apart from Boolean fields) was that of Kleene \cite{Kleene56} where nerve nets were studied in the context of finite state machines. Since then, the study of additively idempotent semirings has lead to multiple connections with such diverse fields as, e.g., graph theory (path algebra), Hamilton-Jacobi theory, automata and language theory, discrete event system theory (where linear systems over additively idempotent semirings modelize discrete event systems of practical interest), and fuzzy logic. As some examples of connections with the latter one, each fuzzy triangular norm (t-norm, see e.g. \cite{Klement13}) conducts to an additively idempotent semiring, which is called in the literature a max-t semiring, and in \cite{Nola05}, \cite{Nola07}, \cite{Nola13}, Nola et al. study certain objects of algebra over semirings arising from fuzzy logic such as MV-algebras or the Lukasiewicz transform. Moreover, nowadays there is a vast literature on matrices with idempotent coefficients and their applications e.g., \cite{Cuninghame79}, \cite{Krivulin09}, \cite{Rodionov11}. \\

As an example of an additively idempotent semiring, we will study the tropical semiring. Tropical algebra was the first section of tropical mathematics to appear, and although a systematic study of the tropical semiring began only after the works of Simon (see \cite{Simon78}), we should note that the $(\mathbb{R},min,+)$ semiring had appeared before in optimization problems (see, e.g., Floyd's algorithm for finding shortest paths in a graph \cite{Floyd62}).\\
Although the problem of solving linear systems was formulated right after the definition of a root for a tropical polynomial was given by Viro \cite{Viro01}, the first paper \cite{Sturmfels05} actually devoted to tropical linear algebra appeared only as late as in $2005$. Moreover, this problem has already proved to be very interesting from the algorithmic point of view as it is known to be in $NP\cap coNP$. Some examples of algorithms proposed for solving tropical linear systems can be found in \cite{Grigoriev13}, \cite{Davydow16} and \cite{Olia21}. At present, there are numerous applications of linear systems over tropical semirings in various areas of mathematics, engineering and computer science. For instance, Noel, Grigoriev, Vakulenko, and Radulescu have recently proposed a way to use algorithms for solving tropical linear systems to study stable states of reaction networks in biology \cite{GrigorievI12} \cite{GrigorievII12}. As an application in fuzzy set theory, Gavalec, N\v emcová and Sergeev have recently proposed a way to convert the problems of max-Lukasiewicz linear algebra, i.e., the linear algebra over max-Lukasiewicz semiring, to the problems of tropical (max-plus) linear algebra \cite{Sergeev15}, and take advantage of the well-developed theory and algorithms of the latter in order to develop a theory of the matrix powers and the eigenproblem over the max-Lukasiewicz semiring. Thus, problems of tropical linear algebra and in particular tropical linear systems are important from both theoretical and practical implications. \\

Letting $(S,+,\cdot)$ be an additively idempotent semiring, we want to solve the system $XA=Y$, where $A=(a_{ij})\in M_{m\times n}(S)$, $Y\in S^{m}$ and $X$ is an unknown vector of size $n$. We have to clarify that our notion of solution differs from that of Viro, in the sense that the maximum is achieved only once. If the system $XA=Y$ has solutions, then we can completely characterize its maximal one. Moreover, in the particular case where $S$ is a generalized tropical semiring (see Definition 1.1.1) we are able to characterize completely its solutions and give an explicit bound of the computational cost associated to its computation. Finally, we give a cryptographic application by applying our previous results to the case of $S$ finite, and propose an attack to the key exchange protocol presented in \cite{Rosenthal07}.

\section{Additively idempotent semirings}

We start by recall some basic background and the notation followed through this work. 

\begin{defn}
    A semiring $R$ is a non-empty set together with two operations $+$ and $\cdot$ such that $(S,+)$ is a commutative monoid, $(S,\cdot)$ is a monoid and 
    the following distributive laws hold:

    \begin{equation}
        \begin{split}
            a(b+c) & = ab + ac \\
            (a+b)c & = ac + bc
        \end{split}
    \end{equation}
    
    We say that $(R,+,\cdot)$ is additively idempotent if $a+a=a$ for all $a \in R$.
\end{defn}

\begin{defn}
    Let $R$ be a semiring and $(M,+)$ be a commutative semigroup with identity $0_M$. $M$ is a right semimodule over $R$ if there is an external operation  $\cdot \ : M\times R\rightarrow M$ such that

    \begin{equation}
    \begin{split}
             (m\cdot a)\cdot b & = m\cdot (a\cdot b)\\
             m\cdot (a+b) & = m\cdot a + m\cdot b \\
             (m+n)\cdot a & = m\cdot a +n\cdot a \\
             0_M\cdot a & = 0_M
    \end{split}
    \end{equation}

    for all $a, b \in R$ and $m,n \in M$. We will denote  $m\cdot a$ by simple concatenation $ma$. 
\end{defn}

Let $(R,+,\cdot)$ be an additively idempotent semiring. Every such a semiring is endowed with an order given by the first operation, which is defined as

\begin{equation}
    a\leq b \mbox{ if and only if } a+b=b.
\end{equation}

This order respects the operation in $R$ and enables as to define a partial order in $R^n$ for every positive integer $n$.

\begin{equation}
    X=(x_1,\dots x_n) \geq Y = (y_1,\dots, y_n) \mbox{ if and only if } x_i \geq y_i \text{   } \forall i=1, \dots, n.
\end{equation}

If $R$ is a semiring, then we will denote by $Mat_n(R)$ the semiring of square matrices of order $n$ for some positive integer $n$ and whose entries are in $R$.

\begin{lem} \label{RepetaOper}
    The previous order is compatible with the operations in $R^n$ as right $Mat_n(R)-$semimodule.
\end{lem}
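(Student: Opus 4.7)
The plan is to verify that both the internal addition of $R^n$ and the external right action of $Mat_n(R)$ are monotone with respect to the componentwise partial order. Since that order is defined componentwise from the order on $R$, it is natural to first establish compatibility at the level of the semiring $R$ itself and then lift the resulting inequalities componentwise.

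First I would prove the scalar compatibility statements: for all $a,b,c \in R$, if $a \leq b$ then $a+c \leq b+c$, $ac \leq bc$, and $ca \leq cb$. The additive case is immediate from additive idempotence: assuming $a+b=b$, the identity
\[
(a+c)+(b+c) = (a+b)+(c+c) = b+c
\]
shows $a+c \leq b+c$. The multiplicative case uses distributivity: from $a+b=b$ one gets $ac+bc=(a+b)c=bc$, so $ac \leq bc$, and symmetrically on the left.

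Next I would lift these facts componentwise to $R^n$. Compatibility of vector addition is immediate, since both $+$ and $\leq$ on $R^n$ act componentwise. For the right action of a matrix $A=(a_{ij}) \in Mat_n(R)$, if $X=(x_1,\dots,x_n) \leq Y=(y_1,\dots,y_n)$, then for every index $j$,
\[
(XA)_j = \sum_{i=1}^{n} x_i a_{ij} \;\leq\; \sum_{i=1}^{n} y_i a_{ij} = (YA)_j,
\]
where the middle inequality follows by applying the scalar multiplicative compatibility to each summand and then iterating the scalar additive compatibility along the sum. This yields $XA \leq YA$. The compatibility in the remaining slot (fixed $X$, varying $A$ with the entrywise order) is obtained by the same reasoning.

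The only obstacle is bookkeeping: one has to make precise that a finite sum of inequalities remains an inequality, which amounts to a short induction using the additive compatibility established above, together with a trivial base case for the empty sum. No genuine difficulty arises, since the statement is essentially forced once the behaviour of $\leq$ under $+$ and $\cdot$ in $R$ has been checked.
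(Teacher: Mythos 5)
Your proposal is correct and follows essentially the same route as the paper: reduce to componentwise inequalities in $R$, multiply each summand on the right by $a_{ij}$, and sum the resulting inequalities to get $XA \leq YA$. The only difference is that you explicitly verify the scalar compatibility facts (e.g.\ $a+b=b$ implies $ac+bc=(a+b)c=bc$) and the summation-of-inequalities step, which the paper simply asserts; this added rigor is welcome but does not change the argument.
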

\begin{proof}

On one hand, if $X=(x_1,\dots ,x_n), \ Y = (y_1,\dots, y_n)  \in R^n$ are such that $X\geq Y $ and $C=(c_1,\dots c_n) \in R^n$, then we have that
$X\geq Y$  implies that  $x_i \geq y_i \ \forall i \in \{1,\dots,n\}$ and therefore  $x_i + c_i \geq y_i +c_i \ \forall i  \in \{1,\dots,n\}$. Thus 
$X+C\geq Y+ C$.

%The proof for $X\geq Y \Rightarrow C+X\geq C+Y$ is analogous to the previous one.

On the other hand, if  $A=(a_{i,j})\in Mat_n(R)$, then  $X\geq Y$, i.e. $x_i \geq y_i \ \forall i=1,\dots,n$. Then $x_ia_{i,j}   \geq y_i a_{i,j} \ \forall i,j  \in \{1,\dots,n\}$ and thus 
$\sum_i x_i a_{i,j} \geq \sum_i y_i a_{i,j} \ \forall i,j  \in \{1,\dots,n\}$. Therefore $XA \geq YA$.

\end{proof}

Let $XA=Y$ be the system of linear equations in $R$ with indeterminates $x_1, \dots , x_n$,

\begin{equation}
    x_1 \begin{pmatrix}
a_{11} \\
a_{12} \\
\vdots \\
a_{1(m-1)} \\
a_{1m} \\
\end{pmatrix}  
+
x_2 \begin{pmatrix}
a_{21} \\
a_{22} \\
\vdots \\
a_{2(m-1)} \\
a_{2m} \\
\end{pmatrix} 
+\dots
+ x_n \begin{pmatrix}
a_{n1} \\
a_{n2} \\
\vdots \\
a_{n(m-1)} \\
a_{nm} \\
\end{pmatrix}
=\begin{pmatrix}
y_{1} \\
y_{2} \\
\vdots \\
y_{m-1} \\
y_{m} \\
\end{pmatrix},
\end{equation}

with $a_{i,j},y_{j} \in R$ for all $i=1, \dots ,n$ $j=1,\dots ,m$. If we denote by $A_i$ the $i-th$ row of $A$,  $A_i=(a_{i1},a_{i2},\dots  ,a_{im} )$, then we have that the system can be written as

\begin{equation}
    x_1A_1 +x_2A_2 + \dots + x_nA_n = Y.
\end{equation}

\begin{defn}
Let $R$ be an additively idempotent semiring, and let $XA=Y$ be a linear system of equations. We say that $\hat{X}$ is the maximal solution of the system if and only if the two following conditions are satisfied 
    \begin{enumerate}
        \item $\hat{X}\in R^n$ is a solution of the system, i.e. $\hat{X}A=Y$,
        \item if $Z\in R^n$  is any other solution of the system, then $Z+\hat{X}=\hat{X}$.
    \end{enumerate}
This last condition is equivalent to $Z\leq \hat{X}$.
\end{defn}

\begin{thm} \label{SolucionSistemaGeneral}
Given $(R,+,\cdot)$ an additively idempotent semiring, let $W_i = \{ x \in R : x A_i + Y = Y \}$ $\forall i =1,\dots,n$. Suppose that these subsets have a maximum with respect to the order induced in $R$

    \begin{equation}
        C_i = \max W_i.
    \end{equation}

\noindent If $XA = Y$ has as a solution, then $Z=(C_1,\dots C_n)$ is the maximal solution of the system.
\end{thm}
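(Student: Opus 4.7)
The plan is to verify the two defining properties of a maximal solution in sequence: first the upper-bound property (any solution lies below $Z$), and then the fact that $Z$ is itself a solution, deducing the latter from the former together with the assumed existence of some solution.

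First I would unpack the key observation about $W_i$. Since $R$ is additively idempotent, the condition $xA_i + Y = Y$ is exactly $xA_i \leq Y$ in the order defined by the excerpt. So $C_i$ is the largest element of $R$ whose scalar product with the row $A_i$ stays below $Y$. Now suppose $V = (v_1,\dots,v_n)$ is any solution, so $\sum_i v_i A_i = Y$. Adding $v_k A_k$ to both sides and using idempotency, $v_k A_k + Y = v_k A_k + \sum_i v_i A_i = \sum_i v_i A_i = Y$, so $v_k \in W_k$ for each $k$. By the assumed existence of the maxima, $v_k \leq C_k$ componentwise, which gives $V \leq Z$. This establishes condition (ii) of the definition of maximal solution for $Z$, once we know $Z$ is a solution.

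Next I would prove condition (i), namely $ZA = Y$. One inequality is immediate: since $C_i \in W_i$ we have $C_i A_i \leq Y$ for every $i$, and summing (which preserves $\leq$ by Lemma \ref{RepetaOper}) yields $ZA = \sum_i C_i A_i \leq Y$. For the reverse inequality I would invoke the hypothesis that the system actually has a solution $V$. Then by the previous paragraph $V \leq Z$, and by Lemma \ref{RepetaOper} the right action of $A$ is order-preserving, so $Y = VA \leq ZA$. Combining with $ZA \leq Y$ gives $ZA = Y$.

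I do not expect a genuine obstacle here; the argument is essentially a general fact about residuated/Galois-type structures. The one subtlety worth being careful with is that the maxima $C_i$ are assumed to exist but are not a priori constructed, so the proof must use them only through the two properties that define them (being in $W_i$ and dominating every element of $W_i$) rather than through any explicit formula. The strict dependence on the existence of at least one solution, which is what forces $ZA = Y$ rather than merely $ZA \leq Y$, should also be highlighted, since without that hypothesis $Z$ would only be a maximal sub-solution.
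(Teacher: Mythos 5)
Your proof is correct and follows essentially the same route as the paper's: both arguments show that any solution $\hat{X}$ has $\hat{x}_k \in W_k$ via additive idempotency, deduce $\hat{X} \leq Z$ from maximality of the $C_i$, obtain $ZA \geq \hat{X}A = Y$ by order-compatibility (Lemma \ref{RepetaOper}) and $ZA \leq Y$ from $C_i \in W_i$, and conclude $ZA = Y$. The only difference is presentational — you state the two inequalities as a clean sandwich and explicitly flag the role of the existence hypothesis, which the paper leaves more implicit.
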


\begin{proof}
If there is a solution $\hat{X} = (\hat{x}_1,\dots , \hat{x}_n)$, then, for all $k = 1, \dots, n$ we have that
    
    \begin{equation} \label{condicion1}
        \hat{X}A= Y  \Rightarrow \hat{x}_1A_1 +\hat{x}_2A_2 + \dots \hat{x_k}A_k + \dots + \hat{x}_nA_n = Y \Rightarrow \hat{x}_kA_k + Y = Y \Rightarrow \hat{x}_k \in W_k,
    \end{equation}

where we used the following relation

\begin{align*}
        Y & = Y + Y, \\  
				  & = \hat{x}_1 \cdot A_1  + \hat{x}_2 \cdot A_2 + \dots + \hat{x}_k \cdot A_k + \dots +\hat{x}_n \cdot A_n + Y, \\
					& = \hat{x}_1 \cdot A_1  + \hat{x}_2 \cdot A_2 + \dots + \hat{x}_k \cdot A_k + \hat{x}_k \cdot A_k + \dots +\hat{x}_n \cdot A_n + Y, \\
					& = \hat{x}_k \cdot A_k + \hat{x}_1 \cdot A_1  + \hat{x}_2 \cdot A_2 + \dots \hat{x}_n \cdot A_n  + Y, \\  
					& = \hat{x}_k \cdot A_k + Y.
    \end{align*}

    Since $\hat{x}_k \in W_k$ from \ref{condicion1}, then we have $C_k\geq \hat{x_k}$ $\forall k = 1, \dots, n$, and hence, by the proof of Lemma \ref{RepetaOper}

    \begin{equation} \label{eqSolMax}
        Z \geq \hat{X} \Rightarrow ZA \geq  \hat{X}A = Y.
    \end{equation}

    In addition, as $max W_i \in W_i$, by the definition of $W_{i}$, we get that

    \begin{equation}
        C_{i} \in W_{i} \Rightarrow ZA \leq Y,
    \end{equation}

 and thus, $ZA = Y$, i.e., $Z$ is a solution. Furthermore, by definition of the order in $R^n$, we have that this solution is maximal.
\end{proof}

\section{Tropical semirings}

As we have shown in Theorem \ref{SolucionSistemaGeneral} we can characterize the maximal solution of a linear system over an additively idempotent semiring under some circumstances. Our aim in this section is to study the existence of solutions in the particular case of tropical semirings. 

\begin{defn}
    Let $(R,+,\cdot)$ be a semiring. We say that $R$ is a generalized tropical semiring if
    \begin{center}
     $a+b = a$ or $a+b=b$ of all $a,b \in R$.
     \end{center}
  \end{defn}

The following lemma is immediate from the preceding definition. 

\begin{lem}
    Every generalized tropical semiring is totally ordered with respect to the order induced by the addition.
\end{lem}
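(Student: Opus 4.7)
The plan is to observe that the order $a \leq b \iff a+b = b$ introduced earlier is already known to be a partial order on $R$ (reflexivity follows from additive idempotency, which in turn is a consequence of the defining property by taking $b=a$; antisymmetry from commutativity of $+$; transitivity from associativity). So all that really needs verification is that this partial order is total.

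For totality, I would take any two elements $a, b \in R$ and invoke the defining property of a generalized tropical semiring: either $a+b = a$ or $a+b = b$. In the first case, using commutativity of $+$ we get $b + a = a$, which by definition means $b \leq a$. In the second case, the identity $a+b=b$ is literally the definition of $a \leq b$. Hence any two elements are comparable.

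I do not anticipate any real obstacle here; the statement is essentially a one-line unpacking of the definition. If anything, the only subtlety worth flagging is that the definition of generalized tropical semiring immediately forces additive idempotency (so the order machinery from Section~2 applies and the lemma's conclusion is even meaningful), and that one should not forget to rely on the commutativity of $+$ when converting $a+b=a$ into $b \leq a$. For this reason I would keep the proof to a couple of lines and simply record the dichotomy $a+b \in \{a,b\}$ together with the observation above.
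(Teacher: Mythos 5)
Your proof is correct and is exactly the one-line unpacking the paper has in mind: the paper states this lemma without proof, calling it ``immediate from the preceding definition,'' and your argument (the dichotomy $a+b\in\{a,b\}$ giving comparability, plus the observation that taking $b=a$ yields additive idempotency so the order is well defined) is precisely that intended argument, carefully spelled out.
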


\begin{exmp}
$(\mathbb{N},\max,\cdot)$ is a semirings where $a+b=\max\{a,b\} = a \text{ or } b$, and thus they are generalized tropical semirings. Analogously, $(\mathbb{R},\max,+)$, $(\mathbb{Z},\max,+)$ and $(\mathbb{Q},\max,+)$ are also generalized tropical semirings and they verify being a group with respect to the second operation. 
\end{exmp}

The previous example induces the following definition.

\begin{defn}
    Let $(S,+)$ a semigroup with a total order which is compatible with the operation $+$. We define the tropicalized semiring of $S$ as the semiring $Trop(S)=S \cup \{ \infty \}$  with inner addition defined by $\max$, given by the order in $S$, and inner product defined by $+$, the inner operation of $S$, extending these to $\infty$ in the following way: 
    
    \begin{enumerate}
        \item $a+\infty = \infty + a = \infty$ $\forall a \in Trop(S)$.
        \item $\max\{a,\infty \} = \max\{\infty,a\} = \infty$ $\forall a \in Trop(S)$.
    \end{enumerate}

\end{defn}

The following result is straightforward.

\begin{lem}
    Let $(S,+)$ be a totally ordered semigroup, and let $(Trop (S), \max , + )$ be its tropicalized semiring. Then $(Trop (S), \max , + )$ is a generalized tropical semiring.
\end{lem}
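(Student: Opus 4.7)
The plan is to verify directly from the construction that $(Trop(S),\max,+)$ satisfies the defining dichotomy of a generalized tropical semiring, namely that $\max\{a,b\}\in\{a,b\}$ for every pair $a,b\in Trop(S)=S\cup\{\infty\}$. Once this is done, it remains only to check the semiring axioms, which are routine.

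For the dichotomy I would split into cases. If both $a,b\in S$, then the hypothesis that $S$ is totally ordered yields either $a\le b$ or $b\le a$, whence $\max\{a,b\}$ equals $b$ or $a$ respectively, in either case one of the two arguments. If at least one of $a,b$ equals $\infty$, then the second clause in the definition of $Trop(S)$ forces $\max\{a,b\}=\infty$, which again coincides with one of the two arguments. These cases are exhaustive, so the dichotomy holds on all of $Trop(S)$.

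For the semiring structure, commutativity, associativity and idempotence of $\max$ descend from the total order on $S$ together with the absorbing extension $\max\{a,\infty\}=\infty$; associativity of $+$ on $Trop(S)$ follows from associativity of $+$ on $S$ and the rule $a+\infty=\infty$; and distributivity of $+$ over $\max$ reduces, after the $\infty$-entries are handled separately via the two extension clauses, to the fact that $+$ is compatible with the order on $S$, i.e.\ $(\max\{a,b\})+c=\max\{a+c,b+c\}$, which is exactly what "compatible with the operation $+$" in the hypothesis delivers. The only real obstacle is bookkeeping the clauses involving $\infty$; there is no substantive mathematical content beyond the total-order hypothesis on $S$, so the whole argument should collapse to a short case check.
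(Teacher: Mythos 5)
Your proof is correct: the paper offers no argument for this lemma at all (it is introduced with ``The following result is straightforward''), and your case check---the dichotomy $\max\{a,b\}\in\{a,b\}$ from the total order on $S$ together with the absorbing rule for $\infty$, plus the routine verification of the semiring axioms via compatibility of $+$ with the order---is precisely the direct verification the paper has in mind. Nothing further is needed.
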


Let us recall from \cite{Olia21} that the tropical semiring is given by semiring $(\mathbb{R} \cup \{\infty\},\max,+)$. It is immediate that the tropical semiring is the tropicalized of $\mathbb{R}$ with the usual operations.

\begin{thm} \label{solucionSemianilloTropical}
    Let $(R,+, \cdot)$ be a generalized tropical semiring where $(R,\cdot)$ is a group. Then the linear system $ X\cdot A = Y$ has at least one solution.
\end{thm}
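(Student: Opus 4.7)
The plan is to construct the candidate maximal solution predicted by Theorem \ref{SolucionSistemaGeneral} explicitly and then verify it really is a solution. The extra structure available here is that $(R,\cdot)$ is a group (so every entry of $A$ is invertible) and that $R$ is totally ordered (by the lemma preceding this theorem), so finite subsets admit minima.

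The first step is to describe the sets $W_i=\{x\in R\ :\ xA_i+Y=Y\}$ explicitly. By distributivity, multiplication in $R$ preserves the additive order, and since every $a_{ij}$ is invertible, multiplication by $a_{ij}^{-1}$ also preserves it. The condition $xa_{ij}\le y_j$ is therefore equivalent to $x\le y_ja_{ij}^{-1}$, so
\[
W_i=\{x\in R\ :\ x\le y_ja_{ij}^{-1}\text{ for every }j=1,\dots,m\}.
\]
As a finite intersection of down-sets in a totally ordered set, $W_i$ admits a maximum, namely $C_i=\min_{j}y_ja_{ij}^{-1}$, which is precisely the element required by Theorem \ref{SolucionSistemaGeneral}.

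The second step is to set $Z=(C_1,\dots,C_n)$ and prove $ZA=Y$. The inequality $ZA\le Y$ is immediate: each $C_i\in W_i$ gives $C_iA_i\le Y$ coordinatewise, and adding these preserves the order by Lemma \ref{RepetaOper}. For $ZA\ge Y$, I would fix $j\in\{1,\dots,m\}$ and, writing $C_i=y_{j_i}a_{ij_i}^{-1}$ for some index $j_i$ attaining the minimum that defines $C_i$, aim to show that some $j_i$ must equal $j$, so that the value $y_j$ is actually attained in the max $(ZA)_j=\sum_i C_ia_{ij}$. Once both inequalities are in place, Theorem \ref{SolucionSistemaGeneral} promotes $Z$ to the maximal solution, which in particular establishes existence.

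The main obstacle is the direction $ZA\ge Y$. The $\le$ direction is essentially formal, whereas $\ge$ requires showing that the arg-min indices $\{j_1,\dots,j_n\}$ exhaust $\{1,\dots,m\}$, a genuinely tropical statement. My plan for this step would be a proof by contradiction: if some $j^{\ast}$ were never attained, then $C_ia_{ij^\ast}<y_{j^\ast}$ and hence $C_i<y_{j^\ast}a_{ij^\ast}^{-1}$ for every $i$, and I would try to exploit the totality of the order together with group invertibility to produce a strictly larger element in some $W_i$, contradicting the maximality of $C_i$. Making this last contradiction precise is the delicate point of the argument, and is where the two hypotheses (being generalized tropical and a group under $\cdot$) must genuinely be used in tandem.
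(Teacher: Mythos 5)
Your first step coincides exactly with the paper's entire proof of this theorem: invertibility of the $a_{ij}$ and compatibility of the order with the product give $W_i=\{x\in R: x\le y_j a_{ij}^{-1}\ \forall j\}$, totality of the order gives $\max W_i = C_i=\min_j y_j a_{ij}^{-1}$, and $C_i\in W_i$; your inequality $ZA\le Y$ is also correct. The genuine gap is precisely the step you flag as delicate: the direction $ZA\ge Y$ cannot be proved, because it is false in general. Take $R=(\mathbb{R},\max,+)$ (one of the paper's own examples of a generalized tropical semiring whose product operation is a group), $n=1$, $m=2$, $a_{11}=a_{12}=0$ (the multiplicative identity of max-plus) and $Y=(0,1)$. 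The two equations of the system are $x_1\cdot a_{11}=y_1$ and $x_1\cdot a_{12}=y_2$, i.e.\ $x_1=0$ and $x_1=1$ as real numbers, so the system has no solution at all. Here $W_1=\{x\le 0\}$, $C_1=\min\{0-0,\,1-0\}=0$, the minimum is attained only at $j=1$, and the index $j^{\ast}=2$ is never attained; yet no contradiction can be extracted from this, since $C_1$ genuinely is $\max W_1$ and $Z=(0)$ genuinely fails to solve the system, giving $ZA=(0,0)\neq(0,1)$. The scenario your proof-by-contradiction tries to exclude is consistent: it is exactly the scenario in which the system is incompatible.

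You should know that the step you could not complete is precisely the step the paper skips. The paper's proof establishes only that $\max W_i=C_i$ exists and then invokes Theorem \ref{SolucionSistemaGeneral}; but that theorem has as a hypothesis that $XA=Y$ admits a solution, which is the very thing Theorem \ref{solucionSemianilloTropical} purports to prove, so the paper's argument is circular and its unconditional existence claim is false (note that the abstract and Theorem \ref{SolucionSistemaGeneral} are phrased conditionally, and that condition (ii) of Theorem \ref{ThmChar} is exactly the attainment property that fails in the example above). The correct conclusion obtainable by this method is conditional: the maxima $C_i$ always exist, $ZA\le Y$ always holds, and $Z=(C_1,\dots,C_n)$ is the maximal solution whenever a solution exists, solvability itself being equivalent to the single verification $ZA=Y$. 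Your instinct that attainment of every $y_j$ is ``a genuinely tropical statement'' requiring something beyond formal order manipulations was the right thing to probe; probing it uncovers a flaw in the statement of the theorem rather than a missing trick in your argument.
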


\begin{proof}
    Firstly, we will prove that the sets $W_i = \{ x \in R : x \cdot A_i + Y = Y \}$  with $A_i = (a_{i,j})_{j=1\dots,m}$, $i=1,\dots,n$ have maximum so we could use Theorem \ref{SolucionSistemaGeneral}.

    If $x\in W_{i}$, then we have that $x \cdot A_i + Y = Y$, where, if we see the row $j-th$, we get $x \cdot a_{i,j} + y_j = y_j$. Therefore
    
    \begin{equation}
        \max \{x \cdot a_{i,j}, y_j \} = y_j \Rightarrow x \cdot a_{i,j} \leq y_j \Rightarrow x  \leq y_j \cdot a_{i,j}^{-1},
    \end{equation} 
    
    so $x \in W_{i}$ if and only if $x \leq y_j \cdot a_{i,j}^{-1}$ for all $j \in \{1,\dots,m\}$. This condition is verified if $x \leq \min_{j} \{y_j \cdot a_{i,j}^{-1} \}$. 
    
    Now, if we denote $C_i = \min_{j} \{ y_j \cdot a_{i,j}^{-1} \}$, we get that $C_i$ is an upper bound of $W_i$, because $x \in W_{i} \Rightarrow x \leq C_i$, and, in addition, it belongs to the set $W_i$, due to the following identity
    
    \begin{equation}
        C_i a_{i,j} = \min_{j} \{ y_j \cdot a_{i,j}^{-1} \} a_{i,j} \leq y_j a_{i,j}^{-1} a_{i,j} = y_j,
    \end{equation}
    
    that holds for all $j \in \{1,\dots,m\}$, where $C_i a_{i,j}+ y_j = y_j$. We can conclude that $\max W_i = C_i$.
\end{proof}

Moreover, as a consequence of the previous result, we get the following corollary.

\begin{cor} \label{caracterizacionMaximo}
    Let $(R,+, \cdot)$ be a generalized tropical semiring where $(R,\cdot)$ is a group, $A=(a_{i,j}) \in Mat_{n\times m} (R)$ and the column vector $Y=(y_1, \dots ,y_m)\in R^m$. Then $(M_1,\dots ,M_n)$ is a solution of the linear system $XA=Y$, where $M_i= \min_{j} \{ y_j a_{i,j}^{-1} \} = \max W_i$ for $i=1,...,n$.
\end{cor}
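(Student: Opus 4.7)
The plan is to deduce the corollary by combining the two substantial results already in hand, Theorem \ref{SolucionSistemaGeneral} and Theorem \ref{solucionSemianilloTropical}. The corollary is essentially the explicit, formula-level version of what those two theorems together deliver in the group case of a generalized tropical semiring, so the bulk of the work is bookkeeping rather than new argument.

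First I would extract from the proof of Theorem \ref{solucionSemianilloTropical} the explicit identification
\begin{equation*}
M_i \;=\; \min_{j}\{y_j\, a_{i,j}^{-1}\} \;=\; \max W_i,
\end{equation*}
noting that the group hypothesis on $(R,\cdot)$ is what makes each $y_j\, a_{i,j}^{-1}$ well-defined, and that the total order on $R$ supplied by the preceding lemma guarantees that the finite minimum exists and lies in $R$. This is the step where the tropical structure is genuinely used: it is this total order together with the compatibility of $\cdot$ with $\leq$ that allowed the equivalence $x\,a_{i,j}\leq y_j \iff x\leq y_j\,a_{i,j}^{-1}$ in the proof of Theorem \ref{solucionSemianilloTropical}.

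Having exhibited each $\max W_i$, I would next observe that the hypotheses of Theorem \ref{SolucionSistemaGeneral} are now in place: the sets $W_i$ all admit maxima, and by Theorem \ref{solucionSemianilloTropical} the system $XA=Y$ admits at least one solution. Theorem \ref{SolucionSistemaGeneral} therefore applies, and its conclusion is exactly that the tuple $(\max W_1,\dots,\max W_n)=(M_1,\dots,M_n)$ is the maximal solution of $XA=Y$; in particular it is a solution, which is what the corollary claims.

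I do not expect any serious obstacle here; the only point that requires a moment's care is the bookkeeping in the first step, making sure that the identification $\min_j\{y_j a_{i,j}^{-1}\}=\max W_i$ is really the content of the inequality chain in the proof of Theorem \ref{solucionSemianilloTropical} rather than a strictly weaker statement. Once that identification is made, the corollary follows by direct citation of Theorems \ref{solucionSemianilloTropical} and \ref{SolucionSistemaGeneral} with no further computation.
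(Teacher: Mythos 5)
Your proposal is correct and is essentially the paper's own (implicit) argument: the corollary is presented there exactly as the combination of Theorem \ref{solucionSemianilloTropical}, whose proof supplies the identification $\max W_i = \min_{j}\{y_j a_{i,j}^{-1}\}$, with Theorem \ref{SolucionSistemaGeneral} applied to the tuple of maxima. Be aware, though, that both your derivation and the paper's hinge on the existence claim of Theorem \ref{solucionSemianilloTropical}, whose proof in the paper only establishes that each $W_i$ has a maximum and never actually exhibits a solution of $XA=Y$, so the corollary (like that theorem) really holds only under the additional hypothesis that the system is solvable.
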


We can also point out that in case the semiring $(R,+,\cdot)$ is such that $(R,\cdot)$ is not a group, we can use the following theorem from \cite{SemigroupIntoGroup}.

\begin{thm}
    A commutative semigroup can be embedded in a group if and only if it is cancellative.
\end{thm}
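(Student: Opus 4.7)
The plan is to prove the two implications separately. The necessity is essentially by definition: if $S$ sits inside a group $G$ and $ab=ac$ in $S$, then the same equation holds in $G$, and multiplying by $a^{-1}\in G$ gives $b=c$; hence $S$ is cancellative. So the whole content of the theorem is the converse, and I would devote the rest of the argument to it.

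For sufficiency my approach is the classical group of fractions construction, i.e., the commutative case of the Grothendieck construction. Writing $S$ multiplicatively, I would introduce on $S\times S$ the relation $(a,b)\sim(c,d)\iff ad=bc$. Reflexivity and symmetry are immediate; the serious step is transitivity, and this is exactly where cancellativity enters: from $ad=bc$ and $cf=de$ one deduces $adf=bcf=bde$, and cancelling $d$ gives $af=be$, so $(a,b)\sim(e,f)$. Without cancellativity this step genuinely breaks down, which explains why the hypothesis is unavoidable; this is the main technical obstacle of the whole proof.

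Next, I would equip the quotient $G:=(S\times S)/\sim$ with the componentwise operation $[a,b]\cdot[c,d]:=[ac,bd]$. A short verification (using commutativity for well-definedness) shows this makes $G$ into an abelian semigroup. Every class of the form $[s,s]$ serves as identity, all such classes coinciding, and $[a,b]^{-1}=[b,a]$ because $[a,b]\cdot[b,a]=[ab,ab]=[s,s]$ for any $s\in S$. Thus $G$ is an abelian group.

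It only remains to embed $S$ into $G$. I would fix any $s_{0}\in S$ and define $\iota:S\to G$ by $\iota(a)=[as_{0},s_{0}]$ (if $S$ happens to have an identity $1$, one may simplify and take $\iota(a)=[a,1]$). That $\iota$ is a semigroup homomorphism is a direct check using commutativity, and injectivity is where cancellativity is invoked one last time: $\iota(a)=\iota(b)$ means $as_{0}\cdot s_{0}=s_{0}\cdot bs_{0}$, and cancelling $s_{0}^{2}$ gives $a=b$. Beyond the transitivity step noted above, everything is routine bookkeeping.
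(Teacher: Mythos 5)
Your proof is correct: the necessity argument, the fraction construction on $S\times S$ with $(a,b)\sim(c,d)\iff ad=bc$, the use of cancellativity exactly at transitivity and at injectivity of $\iota(a)=[as_{0},s_{0}]$, all check out. Note that the paper itself offers no proof of this statement --- it is quoted as a known result from the cited reference --- and your argument is precisely the classical commutative group-of-fractions (Grothendieck) construction that this reference is standardly proved by, so there is nothing to reconcile between the two.
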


\begin{thm}\label{ThmEmb}
    Every generalized tropical semiring $(R,+,\cdot)$ such that $(R,\cdot)$ is cancellative can be embedded into a generalized tropical semiring having inverses with respect to $\cdot$.
\end{thm}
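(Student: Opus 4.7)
The strategy is to build the embedding in two stages: first, algebraically, by invoking the preceding theorem to embed $(R,\cdot)$ in a group; second, order-theoretically, by transporting the total order from $R$ to that group and then defining the additive structure as the maximum with respect to the extended order.

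\emph{Stage 1 (group of fractions).} Since $(R,\cdot)$ is commutative and cancellative, the preceding theorem produces an embedding $\iota\colon (R,\cdot)\hookrightarrow (G,\cdot)$ into a commutative group. Concretely, I realise $G$ as the set of fractions $[a/b]$ modulo the relation $(a,b)\sim(c,d)\iff ad=bc$, with $\iota(r)=[r/1]$.

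\emph{Stage 2 (extending the order).} I first observe that the additive total order on $R$ is multiplicatively compatible: if $a\le b$ then $a+b=b$, so $(a+b)c=bc$, and distributivity gives $ac+bc=bc$, i.e.\ $ac\le bc$. I then extend the order to $G$ by declaring
\begin{equation*}
[a/b]\le[c/d]\iff ad\le bc.
\end{equation*}
The delicate point is well-definedness: given $ab'=a'b$ and $cd'=c'd$, multiplying $ad\le bc$ by $b'd'$ and using these equalities rewrites the inequality as $bd\cdot a'd'\le bd\cdot b'c'$, and one then cancels the common factor $bd$. The cancellation of inequalities itself needs a brief lemma: in a totally ordered cancellative monoid, if $xu\le xv$ and $v\le u$ then by compatibility $xv\le xu$, hence $xu=xv$, and cancellativity forces $u=v$, so in every case $u\le v$. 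Totality of the extended order on $G$ is immediate from totality on $R$, and compatibility with multiplication on $G$ follows from the same cancellation argument applied to $ad\cdot ef\le bc\cdot ef$.

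\emph{Stage 3 (tropical structure and embedding).} Setting $g\oplus h:=\max(g,h)$ with respect to the extended order turns $(G,\oplus,\cdot)$ into a generalized tropical semiring: the defining property $g\oplus h\in\{g,h\}$ is built in, commutativity and associativity of $\oplus$ are automatic, and the distributive law $(g\oplus h)\cdot k=gk\oplus hk$ is exactly the multiplicative compatibility of the order established in Stage 2. Finally, $\iota$ is a semiring embedding: it is multiplicative by construction, and because it is order-preserving (clearly $a\le b$ in $R$ iff $[a/1]\le[b/1]$ in $G$) and addition in both semirings is the maximum of the respective orders, it preserves $+$ as well. If one wishes to retain an absorbing multiplicative zero $0_R\in R$, it suffices to apply the construction above to $R\setminus\{0_R\}$ and then adjoin $0_R$ as the bottom element of the extended order.

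\emph{Main obstacle.} The only genuinely non-routine point is the well-definedness of $\le$ on $G$ together with the cancellation-of-inequalities lemma that underpins it; this is precisely where the cancellativity hypothesis is indispensable, both for the embedding itself (through the cited theorem) and for transporting the order. Once the order has been successfully extended, the generalized tropical structure on $G$ is forced by the order, and checking that $\iota$ is a semiring embedding is a direct unwinding of definitions.
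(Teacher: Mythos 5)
Your proof is correct and takes essentially the same route as the paper's: embed $(R,\cdot)$ into its group of fractions $Q(R)$, transport the total order by declaring $a/b \leq c/d \Leftrightarrow ad \leq bc$, and define addition on $Q(R)$ as the maximum, so that the inclusion becomes a semiring embedding. The only difference is one of rigor, in your favor: you explicitly verify well-definedness of the extended order and the cancellation-of-inequalities lemma (and note the treatment of a possible absorbing zero), all of which the paper's proof leaves implicit.
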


\begin{proof}
    Let $(R,+,\cdot)$ be a generalized tropical semiring such that $(R,\cdot)$ is cancellative. Then it can be embedded into a group, which we will denote as $Q(R)$. Note that the elements of  $Q(R)$ are of the form $a/b := ab^{-1}$ with $a,b \in R$. As $R$ is totally ordered, then we can endow $Q(R)$ with an order by 
\begin{equation}
    \frac{a}{b} \leq \frac{c}{d} \Leftrightarrow ad\leq bc \text{ }\forall a,b,c,d \in R.
\end{equation}

    Moreover, as the order in $R$ is total, the order induced in $Q(R)$ is also a total order. We can define the addition in $Q(R)$ as

\begin{equation}
    \frac{a}{b} +_Q \frac{c}{d} = \max\Big\{\frac{a}{b} , \frac{c}{d} \Big\}.
\end{equation}

Note that due to the properties of $\max$, we have that  $Q(R)$ is a generalized tropical semiring, where there are inverses respect to the second operation.

Finally, $R \hookrightarrow Q(R)$ is an injective homomorphism of semirings.
\end{proof}

\begin{exmp}
    We have that $(\mathbb{N},\max, \cdot)$ is a generalized tropical semiring. Furthermore, $(\mathbb{N},\cdot)$ is cancellative. By the previous result, we can embed $(\mathbb{N},\max, \cdot)$ into a generalized tropical semiring with inverses, which by the preceding construction, can be $(\mathbb{Q}_{>0},\max, \cdot)$.
\end{exmp}

Now we will show how to find every solution of the previous system.

\begin{lem} \label{CiWi}
 Let $R$ be a generalized tropical semiring, and $XA=Y$ a linear system of equations, being $A=(a_{i,j}) \in Mat_{n\times m} (R)$ and  $Y=(y_1, \dots ,y_m)\in R^m$. Let $C_i = \max W_i $. Then $x \in W_i$ if and only if  $x\leq C_i$.
\end{lem}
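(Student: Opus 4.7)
The statement is a biconditional, so the plan is to prove the two directions separately. The forward direction $(\Rightarrow)$ is immediate: by definition $C_i=\max W_i$, so every element of $W_i$ is dominated by $C_i$. All the real content sits in the converse $(\Leftarrow)$: assuming $x\le C_i$, show $xA_i+Y=Y$, i.e.\ $x\in W_i$.

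To handle the converse, I would argue componentwise on $j=1,\dots,m$. The first step is compatibility of the order with the right action: from $x+C_i=C_i$, distributivity gives $xa_{i,j}+C_ia_{i,j}=(x+C_i)a_{i,j}=C_ia_{i,j}$, so $xa_{i,j}\le C_ia_{i,j}$. The second step uses that $C_i\in W_i$ (which holds because $W_i$ has a maximum, guaranteed in the settings considered, cf.\ Theorem \ref{solucionSemianilloTropical}); this means $C_iA_i+Y=Y$, and reading off the $j$-th coordinate yields $C_ia_{i,j}+y_j=y_j$, equivalently $C_ia_{i,j}\le y_j$. Chaining these two inequalities gives $xa_{i,j}\le y_j$ for every $j$, which rephrased is $xa_{i,j}+y_j=y_j$. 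Summing these componentwise equalities yields $xA_i+Y=Y$, so $x\in W_i$.

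The argument only uses the totally ordered structure of a generalized tropical semiring together with distributivity, so no appeal to inverses or to the embedding of Theorem \ref{ThmEmb} is needed. There is no real obstacle: the only point worth flagging is that the compatibility $x\le C_i\Rightarrow xa_{i,j}\le C_ia_{i,j}$ must be justified from the defining equation $x+C_i=C_i$ via distributivity rather than being assumed, since Lemma \ref{RepetaOper} handles left addition and the right matrix action but the step used here is the multiplicative half of that compatibility.
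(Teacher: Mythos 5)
Your proof is correct and follows essentially the same route as the paper: both rest on the two facts that $x\le C_i$ implies $xA_i\le C_iA_i$ (via distributivity) and that $C_i\in W_i$ gives $C_iA_i+Y=Y$, then conclude $xA_i+Y=Y$. The only cosmetic difference is that you argue componentwise and invoke the definition of the order directly, while the paper argues at the vector level and closes with the two inequalities $xA_i+Y\le Y$ and $Y\le xA_i+Y$; it also, like you, treats the forward direction as immediate from the definition of maximum.
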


\begin{proof}
    
Let $x\leq C_i$ and let $A_i$ the $i$-th row of $A$, for $i=1, \dots ,n$. Then $x A_i\leq C_i A_i$ and so $x A_i + Y\leq C_i A_i + Y = Y$ Moreover, $Y \leq xA_i + Y$, since

 \[ Y + (xA_i + Y) = (xA_i + Y )+ Y = xA_i + (Y+Y) = xA_i + Y\] 
 
 and hence $xA_i + Y=Y$ and $x_i \in W_i$.
\end{proof}

Let $R$ be a generalized tropical semiring, and let $XA=Y$ be a linear system of equations with $Y=(y_i)\in R^m$ and $A=(a_{i,j}) \in Mat_{n\times m} (R)$. Let $W_i = \{ x \in R : x \cdot A_i + Y = Y \}$, then by the proof of Theorem \ref{solucionSemianilloTropical}, $W_i$ has a maximum that will be denoted by $C_i$ for all $i=1,\dots,n$.

\begin{thm}\label{ThmChar}
Let $R$ be a generalized tropical semiring, and let $XA=Y$ be a system of equations with $Y=(y_i)\in R^m$ and $A=(a_{i,j}) \in Mat_{n\times m} (R)$. $X = (x_1, x_2,...,x_n)$ is solution of the system if and only if 

\begin{enumerate}
    \item $x_i \cdot a_{i,j} + y_{j} = y_{j}$ ,$\forall j =1,\dots,m$,\label{unoLatino}
    \item $\forall j =1,\dots,m$ $\exists h \in \{1,\dots,n\}$ such that $x_h\cdot a_{h,j} = y_{j}$ \label{dosLatino}.
\end{enumerate}  

\end{thm}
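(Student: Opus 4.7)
The key fact I would exploit is that in a generalized tropical semiring any finite sum $b_1+\dots+b_n$ equals one of its summands, namely the maximum with respect to the total order given by the preceding lemma; this follows immediately from the defining property $b+b'\in\{b,b'\}$ by induction on $n$. With this in hand, both implications become short.

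For the forward direction, I would assume $XA=Y$, which coordinate-wise reads $\sum_{i=1}^{n} x_i\cdot a_{i,j}=y_j$ for every $j\in\{1,\dots,m\}$. Applying the key fact to the left-hand side, there exists $h\in\{1,\dots,n\}$ with $x_h\cdot a_{h,j}=y_j$, which is precisely condition (2). Since this $x_h\cdot a_{h,j}$ is the maximum of the $n$ summands, every term satisfies $x_i\cdot a_{i,j}\leq y_j$, which by the definition of the order rewrites as $x_i\cdot a_{i,j}+y_j=y_j$, yielding condition (1) (quantified over both $i$ and $j$).

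For the backward direction, I would fix $j$ and show that the $j$-th coordinate of $XA$ equals $y_j$. Condition (1) gives $x_i\cdot a_{i,j}+y_j=y_j$ for every $i$; adding these identities successively and using associativity together with the idempotence of $+$, I obtain $\sum_i x_i\cdot a_{i,j}+y_j=y_j$, i.e.\ $\sum_i x_i\cdot a_{i,j}\leq y_j$. Condition (2) then furnishes an index $h$ with $x_h\cdot a_{h,j}=y_j$, and since $\sum_i x_i\cdot a_{i,j}$ dominates any individual summand (because the sum is one of them, hence the maximum), we also have $y_j\leq \sum_i x_i\cdot a_{i,j}$. Combining, $\sum_i x_i\cdot a_{i,j}=y_j$ for every $j$, and hence $XA=Y$.

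I do not foresee any real obstacle: the entire argument relies on the peculiar feature that addition in a generalized tropical semiring selects one of its operands, so that a finite sum is simultaneously bounded above by, and attained by, one of its terms. The only bookkeeping worth caution is making explicit the induction turning the binary rule $b+b'\in\{b,b'\}$ into the statement about $n$-ary sums invoked in both directions.
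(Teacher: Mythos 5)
Your proposal is correct and follows essentially the same route as the paper: both rest on the observation that in a generalized tropical semiring an $n$-fold sum equals (and hence is the maximum of) one of its summands, yielding condition (2) as ``the maximum is attained'' and condition (1) as ``every summand is dominated,'' with the converse obtained by absorbing the terms $x_i\cdot a_{i,j}$ into $y_j$. The only notable difference is presentational: you state the $n$-ary ``sum equals max'' fact as an explicit induction (which the paper invokes implicitly when it appeals to the definition of generalized tropical semiring for a sum of $n$ terms), and your backward direction argues via the two inequalities and antisymmetry rather than the paper's direct substitution-and-absorption computation.
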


\begin{proof}
Let us assume first that $X = (x_1, x_2,...,x_n)$ is solution of the system. Then the first condition has been already proven in  equation \ref{condicion1}. Let us show now the second condition. We have that

    \begin{equation}
         x_1 \cdot A_1  + x_2 \cdot A_2   + \dots  x_n \cdot A_n  = Y.
    \end{equation} 
    
    For a fixed $j$, it comes that
    
    \begin{equation}
         x_1 \cdot a_{1,j}  + x_2 \cdot a_{2,j}   + \dots  x_n \cdot a_{n,j}  = y_j.
    \end{equation} 

    Using the definition of generalized tropical semiring, we have that there exists $h \in \{1,\dots,n\}$ such that $x_h  a_{h,j} = y_{j}$.

\medskip

Conversely, let us suppose now that $X$ verifies both conditions. Then 

\begin{equation}
     \sum_{i} x_i \cdot a_{i,j} =  x_1 \cdot a_{1,j} + \dots x_{h-1} \cdot a_{{h-1},j} +x_{h} \cdot a_{h,j}+ x_{h+1} \cdot a_{{h+1},j} + \dots + x_n \cdot a_{n,j} 
\end{equation}

Now, by condition \ref{dosLatino}, there exists $h \in \{1,\dots,n\}$ such that $x_h a_{h,j} = y_j$, so we can rewrite the equation as 
\begin{equation}
     \sum_{i} x_i \cdot a_{i,j} = \sum_{i\not = j} x_i \cdot a_{i,j} + y_j
\end{equation}

As  $x_{i}a_{i,j} + y_j = y_j$ for all $j$, and as the semiring is additively idempotent, we finally get 

\begin{equation}
     \sum_{i} x_i \cdot a_{i,j} =  y_j
\end{equation}

for all $j \in \{1,\dots,m\}$. Thus  $X$  is solution of the system.
\end{proof}

\begin{cor}\label{CorCharSol}
Let $(R,+,\cdot)$ be a generalized tropical semiring such that $(R,\cdot)$ is a group. Then $X = (x_1, x_2,...,x_n)$ is solution of the system $XA=Y$ if and only if 

\begin{enumerate}
    \item $x_i \in W_i \ \forall i =1,\dots,n$ \label{unoArabe}.
    \item $\forall j =1,\dots,m$ $\exists h \in \{1,\dots,n\}$ such that $x_h=C_h =  y_{j}a_{h,j} ^{-1}$  \label{dosArabe}
\end{enumerate}  

\noindent where  $a_{h,j} ^{-1}$ is the inverse of $a_{h,j}$ in a generalized tropical semiring having inverses with respect to $\cdot$ and that contains $R$. 

\end{cor}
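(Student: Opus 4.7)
The plan is to derive this corollary as a direct consequence of Theorem \ref{ThmChar}, exploiting the group structure of $(R,\cdot)$ to rewrite the multiplicative equality $x_h\cdot a_{h,j}=y_j$ in the inverted form $x_h=y_j\cdot a_{h,j}^{-1}$, and using the explicit description $C_h=\min_{j}\{y_j\cdot a_{h,j}^{-1}\}$ that was obtained in the proof of Theorem \ref{solucionSemianilloTropical}. The invertibility used is legitimate either because $(R,\cdot)$ is already a group by hypothesis, or in any case because $R$ sits inside a generalized tropical semiring with inverses via Theorem \ref{ThmEmb}.

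For the forward implication, I would assume that $X$ solves $XA=Y$ and apply Theorem \ref{ThmChar}. Its condition \ref{unoLatino} reads $x_i\cdot a_{i,j}+y_j=y_j$ for every $j$, which by the definition of $W_i$ is the statement $x_i\in W_i$; this yields \ref{unoArabe}. For condition \ref{dosLatino}, for each $j$ there is an index $h$ with $x_h\cdot a_{h,j}=y_j$; multiplying by $a_{h,j}^{-1}$ gives $x_h=y_j\cdot a_{h,j}^{-1}$. To upgrade this equality to $x_h=C_h$, I would combine two inequalities: on the one hand, $x_h\in W_h$ together with Lemma \ref{CiWi} forces $x_h\leq C_h$; on the other hand, $x_h=y_j\cdot a_{h,j}^{-1}\geq \min_{j'}\{y_{j'}\cdot a_{h,j'}^{-1}\}=C_h$. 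Squeezing these two bounds produces $x_h=C_h=y_j\cdot a_{h,j}^{-1}$, which is exactly \ref{dosArabe}.

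For the converse, I would assume both \ref{unoArabe} and \ref{dosArabe} and verify the hypotheses of Theorem \ref{ThmChar}. Condition \ref{unoArabe} is literally \ref{unoLatino} rewritten in terms of membership in $W_i$. Condition \ref{dosArabe} provides, for each $j$, an index $h$ with $x_h=y_j\cdot a_{h,j}^{-1}$; right-multiplying by $a_{h,j}$ recovers $x_h\cdot a_{h,j}=y_j$, which is precisely condition \ref{dosLatino}. Theorem \ref{ThmChar} then certifies that $X$ is a solution of $XA=Y$.

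I do not expect any genuine obstacle: the corollary is essentially a translation of Theorem \ref{ThmChar} into the language available once the second operation is invertible. The only care required is the standard observation that in a totally ordered group the map $z\mapsto z\cdot a^{-1}$ is order-preserving, so that the chain $x_h\leq C_h$ and $x_h\geq C_h$ can indeed be closed up, and this monotonicity is already used implicitly in the proof of Theorem \ref{solucionSemianilloTropical}.
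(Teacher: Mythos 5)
Your proposal is correct and follows essentially the same route as the paper: both reduce the corollary to Theorem \ref{ThmChar}, identify condition \ref{unoArabe} with membership in the $W_i$, and close the key step by the same squeeze $x_h = y_j\cdot a_{h,j}^{-1} \geq \min_{p}\{y_p\cdot a_{h,p}^{-1}\} = C_h \geq x_h$, with the converse obtained by multiplying $x_h = y_j\cdot a_{h,j}^{-1}$ back by $a_{h,j}$. The only cosmetic difference is that you organize the argument as two implications of the corollary, whereas the paper phrases it as an equivalence of the two pairs of conditions; the mathematics is identical.
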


\begin{proof}

It is enough to show that the conditions are equivalent to those of Theorem \ref{ThmChar}

Firstly, note that the first condition and condition \ref{unoLatino} of Theorem \ref{ThmChar} are equivalent. 

We will show now that, if condition \ref{unoArabe} is true, then condition \ref{dosArabe} is equivalent to condition \ref{dosLatino} of Theorem \ref{ThmChar}.

If \ref{unoArabe} is satisfied, then $x_i \leq C_i=max\ W_i$. In addition, if \ref{dosLatino} of Theorem \ref{ThmChar} is verified, then 

\begin{equation}
    x_h = y_j a_{h,j} ^{-1}\geq \min_{p} \{ y_p a_{h,p}^{-1} \} = C_h \geq x_h.
\end{equation}

\noindent using Corollary \ref{caracterizacionMaximo} and thus $x_h=C_h$. The converse is trivial. 
\end{proof}

\begin{rem} \label{RCaracterizacion}
Let $R$  be a generalized tropical semiring as above and let us consider the system $AX=Y$. $X = (x_1, x_2,...,x_n)$ is a solution of the system if and only if, for every equation  $j \in \{1,\dots,m\}$ of the system:

\begin{enumerate}
    \item $ a_{i,j}\cdot x_i + y_{j} = y_{j}$
    \item $\exists h  \in \{1,\dots,n\}$ such that $ a_{h,j} \cdot x_h = y_{j}$
\end{enumerate}  
\end{rem}

Then, by the previous Corollary,  for every $j =1,\dots,m$ there exists $h \in \{1,\dots,n\}$ such that $C_h =  y_{j}a_{h,j} ^{-1}$ and in addition $x_h=C_h$. As a result, there exits a non-empty set $Index(j) = \{i  \in \{1,\dots,n\}:C_{i} = y_i  a_{i,j}^{-1}\}$. This induces the following result that provides and algorithm to solve linear equations systems.

\begin{thm}
    Let $(R,+,\cdot)$ be a generalized tropical semiring such that $(R,\cdot)$ is a group and let $XA=Y$ be a system of equations with $Y\in R^m$ and $A=(a_{i,j}) \in Mat_{n\times m} (R)$. Determining all the solutions of the system has a computational cost of $o(nm)$.
\end{thm}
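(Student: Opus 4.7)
The plan is to exhibit an algorithm that computes a complete description of the solution set using a number of basic semiring operations (multiplication, inversion, comparison, addition) bounded linearly in the input size $nm$. The notation $o(nm)$ in the statement should be read in the ``of order $nm$'' sense, because any procedure that reports information about every one of the $m$ equations must, in the worst case, inspect every entry of $A$, so a strictly sublinear bound in $nm$ is not achievable at this level of granularity; what is genuinely being claimed is that the cost is linear in the number of matrix entries, in particular polynomial.

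The first observation is that ``determining all solutions'' cannot mean enumerating them: over an infinite generalized tropical semiring, a consistent system has uncountably many solutions in general. By Corollary \ref{CorCharSol} together with Remark \ref{RCaracterizacion}, a complete description of the solution set is furnished by the data
\[
(C_1,\dots,C_n) \quad\text{together with}\quad (Index(1),\dots,Index(m)),
\]
where $C_i=\min_j\{y_j a_{i,j}^{-1}\}$ and $Index(j)=\{i : C_i\cdot a_{i,j}=y_j\}$; a vector $X=(x_1,\dots,x_n)$ is a solution if and only if $x_i\le C_i$ for every $i$ and, for every $j$, some $h\in Index(j)$ satisfies $x_h=C_h$. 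So the computational task reduces to producing this data.

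I would then split the algorithm in two passes over the matrix. In the first pass, for each $i=1,\dots,n$, compute $C_i$ by scanning the $i$-th row of $A$ and the vector $Y$: this requires $m$ inversions, $m$ products $y_j a_{i,j}^{-1}$, and $m-1$ comparisons to take the minimum. Totalled over $i$, the first pass costs a constant multiple of $nm$ operations and yields $(C_1,\dots,C_n)$, which by Corollary \ref{caracterizacionMaximo} is the maximal solution whenever one exists. In the second pass, for each $j=1,\dots,m$, scan the $j$-th column of $A$: for each $i$, compute $C_i\cdot a_{i,j}$ and compare it with $y_j$, recording $i$ in $Index(j)$ exactly when equality holds. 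This pass costs $O(n)$ per column, totalling another constant multiple of $nm$ operations.

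The main (mild) obstacle is conceptual rather than technical: one must justify that the output just described indeed encodes \emph{all} solutions, and that consistency of the system can be detected within the same budget. The first point is precisely Corollary \ref{CorCharSol} and Remark \ref{RCaracterizacion}. For the second, after the two passes one only needs to check, for each $j$, that $Index(j)$ is non-empty; if this fails for some $j$, then condition \ref{dosLatino} of Theorem \ref{ThmChar} cannot be met and the system is inconsistent; otherwise $(C_1,\dots,C_n)$ is a solution and the pair $\bigl((C_i)_i,(Index(j))_j\bigr)$ parametrises the full solution set. This consistency test is built into the second pass at no extra asymptotic cost, and combining both phases yields the announced linear bound in $nm$.
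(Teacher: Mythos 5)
Your proposal is correct and follows essentially the same route as the paper: both reduce ``determining all solutions'' to computing the maxima $C_i=\min_j\{y_ja_{i,j}^{-1}\}$ and the sets $Index(j)$, justify via Corollary \ref{CorCharSol} that this data parametrises the full solution set, and count a constant number of passes over the $nm$ entries (the paper builds intermediate sets $Row(i)$ and inverts them, whereas you fill $Index(j)$ directly by testing $C_i\cdot a_{i,j}=y_j$; this is an immaterial implementation difference). Your reading of $o(nm)$ as ``linear in $nm$'' and your explicit non-emptiness test on the $Index(j)$ to detect inconsistency are both sensible refinements of what the paper does implicitly, but they do not change the argument.
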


\begin{proof}
We observe that the solution of the system is given by the vectors $X=(x_1,x_2, \dots, x_n)$ with the following assignation:

    For every $j$, we can choose $h \in Index(j)$ and such that $x_h= C_{h}$. The rest of the $x_p$ $p=1,\dots ,n$, $p\not=h$ verify that $x_p \leq C_p$.

To prove it, note that every  $X=(x_1,\dots,x_n)$ with this assignation, verifies that $x_i \leq C_i$, and from Lemma \ref{CiWi}, we have $x_i \in W_i$. Moreover we observe that for every $j$, we take some $h \in Index(j)$ with  $x_h= C_{h} =  y_{j}a_{h,j} ^{-1}$. Thus, by the preceding Corollary, it is a solution.

    On the other hand, if $X=(x_1,\dots,x_n)$ is a solution, then it satisfies the conditions of the preceding Corollary. Then $x_i \in W_i$, and so $x_i \leq C_i= \max W_h$. Moreover, for every $j= 1, \dots, m$, there exists $h $ such that $x_h = C_h$. But then $x_h = y_h  a_{h,j}^{-1}$ and hence $h \in Index(j)$. 

    As a result, to determine all the solutions of the system, it is enough to compute $C_i=\max W_i$ for every $i=1,\dots,n$ and $Index(j) = \{i  \in \{1,\dots,n\}:C_{i} = y_i  a_{i,j}^{-1}\}$ for every $j=1,\dots,m$. 
    
    To calculate them, we can use the following algorithm.

    We first compute the matrix $M \in Mat(R)_{n\times m}$, whose $i-th$ column, $M_i$, is of the form 

        \begin{equation}
            M_i=(y_j a_{i,j}^{-1})_{j=1,\dots,m} = \begin{pmatrix}
y_1a_{i1}^{-1} \\
y_2a_{i2}^{-1} \\
\vdots \\
y_ma_{im}^{-1} \\
\end{pmatrix} 
        \end{equation}
\noindent for every $i=1,\dots,n$

        This results in the computation of $nm$ inverses, and  $nm$  operations in the set $R$.

        Then we calculate $C_i=\min M_i = \min_{j=1,\dots,m} y_j a_{i,j}^{-1}$ for every $i=1,\dots,n$, and simultaneously we compute the set $Row(i)=\{j\in \{1,\dots,m\}; C_i=y_j a_{i,j}^{-1} \}$, which give $m$ comparisons for each column, and thus, $nm$ operations.

        Next we build $Index(j)= \{i  \in \{1,\dots,n\}:C_{i} = y_i  a_{i,j}^{-1}\}$ using $Row(i)$ with the following process
        \begin{enumerate}
            \item Take $Index(j)$ empty for every $l$ $j=1,\dots ,m$.
            \item Go over $Row(i)$ for $i=1,\dots,n$ and if $j\in Row(i)$, then add $i$ to $Index(j)$.
        \end{enumerate}

        This process requires to go over $Row(i)$, and since $Row(i) \subseteq  \{1,\dots,m\}$, this procedure require  $o(m)$ comparisons for every $i=1,\dots,n$, and hence the cost is $o(nm)$.

Taking into account that the comparison of two element is made through the addition of both elements in $R$, the total cost is $o(nm)$ basic operations in the ring $(R,+,\cdot)$.
\end{proof}

\begin{rem}
We recall that when the generalized tropical semiring $R$ is such that $(R, \cdot)$ is cancellative, which is less restrictive than being a group, using Theorem \ref{ThmEmb} we can embed this semiring into a generalized tropical semiring $S$ in the conditions of Corollary \ref{CorCharSol}, and then we can solve any linear system as previously, obtaining each solutions in $S$ and then check if any of them is in fact contained in $R$.
\end{rem}

Within the references,  in \cite{Olia21} a method to solve a system of equation by normalization is presented. In \cite{Estructura}, the structure of solution of a system of equation over a tropical semiring is studied, using the rank over rows and columns, and subsequently a generalized method of Cramer is used to find the maximal solution over a tropical semiring.

Examples of system of equations appear in both papers; in \cite{Olia21} a solution, not necessarily maximal, is computed, and in \cite{Estructura}, the authors provide the range of freedom of the solution. Now, using the preceding, we can show the complete set of solutions of those systems of equations.

To avoid misreading of the operation over $R$ as usual ring and as tropical semiring, the operation $(R,\max, +)$ will be denoted as $(R,+_T,\cdot_T)$.

\begin{exmp}
In \cite{Estructura} , the authors compute a solution of the system 

   \begin{equation}
       \begin{bmatrix}
-4 & 7 & 12 & -3 & 0 \\
3 & 2 & 8 & 3 & -1 \\
-9 & 1 & 6 & 0 & 2 \\
2 & 8 & -5 & 1 & -3 \\
\end{bmatrix}
\begin{bmatrix}
x_1 \\
x_2 \\
x_3 \\
x_4 \\
x_5 \\
\end{bmatrix}
=
\begin{bmatrix}
14 \\
10 \\
8 \\
11 \\
\end{bmatrix}.
   \end{equation} 

Let us determine the complete set of solutions of the system as well as the maximal solution. 

Firstly, we calculate $y_j \cdot_T a_{i,j}^{-1} = y_j - a_{i,j}$, obtaining the matrix 

\begin{equation}
(y_j-a_{i,j})_{i,j}=
\begin{bmatrix}
18 & 7 & 2 & 17 & 14 \\
7 & 8 & 2 & 7 & 11 \\
17 & 7 & 2 & 8 & 6 \\
8 & 3 & 16 & 12 & 14\\
\end{bmatrix}.
\end{equation}

Then we have $C_i = \min_{j} \{ y_j  - a_{i,j} \}$ and the rows where these minimum are reached

\begin{equation}
\begin{tabular}{|c|c|c|}
$\max W_i$ & Value & Row\\
\hline
$C_1$ & $7$ & $\{2 \}$ \\
$C_2$ & $3$ & $\{4 \}$ \\
$C_3$ & $2$ & $\{ 1,2,3\}$ \\
$C_4$ & $7$ & $\{2 \}$ \\
$C_5$ & $6$ & $\{ 3\}$ \\
\hline
\end{tabular}
\end{equation}
Now, let us compute $Index(j)$

\begin{equation}
\begin{tabular}{|c|c|c|}
 & Column \\
\hline
$Index(1)$ & $\{ 3\}$  \\
$Index(2)$ & $\{ 1,3,4\}$  \\
$Index(3)$ & $\{ 3,5\}$ \\
$Index(4)$ & $\{ 2\}$  \\
\hline
\end{tabular}
\end{equation}

Thus the solutions are

\begin{equation}
\begin{split}
     &v_1=(7,3 ,2, 7 , h_5) \\
    &v_1=( 7,3 ,2, h_4 , 6) \\
    &v_1=( h_1,3 ,2, 7 , h_5) \\
    &v_1=( h_1,3 ,2, h_4 , 6) \\
    &v_1=( h_1,3, 2, 7 , h_5) \\
    &v_1=( h_1,3, 2, 7 , 6) \\
\end{split}
\end{equation},

with $h_i \leq C_i$.

Moreover, we can observe that the maximal solution is $(7,3,2,7,6)$.

\end{exmp}

\begin{exmp}
    In \cite{Olia21},  the authors compute the maximal solution of 
    \begin{equation}
        \begin{bmatrix}
165 & 57 & 72 & -7 & 0 \\
141 & 64 & 48 & 3 & -1 \\
137 & 101 & 46 & 0 & 2 \\
-243 & 98 & -206 & 156 & -5 \\
\end{bmatrix}
\begin{bmatrix}
x_1 \\
x_2 \\
x_3 \\
x_4 \\
x_5 \\
\end{bmatrix}
=
\begin{bmatrix}
102 \\
78 \\
76 \\
160 \\
\end{bmatrix}
    \end{equation}

    Using our proposed method, we will complete all the solutions, as well as the maximal one.  

\begin{equation}
  (y_j-a_{i,j})_{i,j}= 
  \begin{bmatrix}
-63 & 45 & 30 & 109 & 102 \\
-63 & 14 & 30 & 75 & 79 \\
-61 & -25 & 30 & 76 & 74 \\
403 & 62 & 366 & 4 & 165 \\
\end{bmatrix}
\end{equation}.

Then we have $C_i = \min_{j} \{ y_j  - a_{i,j} \}$ and the rows where those minimum are reached

\begin{equation}
\begin{tabular}{|c|c|c|}
$\max W_i$ & Value & Row\\
\hline
$C_1$ & $-63$ & $\{3 \}$ \\
$C_2$ & $-25$ & $\{3 \}$ \\
$C_3$ & $30$ & $\{ 1,2,3\}$ \\
$C_4$ & $4$ & $\{4 \}$ \\
$C_5$ & $74$ & $\{ 3\}$ \\
\hline
\end{tabular}
\end{equation}.

Now we compute $Index(j)$

\begin{equation}
\begin{tabular}{|c|c|c|}
 & Columns \\
\hline
$Index(1)$ & $\{ 3\}$  \\
$Index(2)$ & $\{ 3\}$  \\
$Index(3)$ & $\{ 1,2,3,5\}$ \\
$Index(4)$ & $\{ 4\}$  \\
\hline
\end{tabular}
\end{equation}

Thus the solution are

\begin{equation}
\begin{split}
    &v_1=(-63,h_2 ,30, 4 , h_5) \\
   &v_1=(h_1,-25 ,30, 4 , h_5) \\
   &v_1=(h_1,h_2 ,30, 4 , h_5) \\
   &v_1=(h_1,h_2 ,30, 4 , 74) \\
\end{split}
\end{equation},

where $h_i \leq C_i$

In addition, the maximal solution is $(-63,-25,30,4,74)$, which matches the one obtained in the original paper.
    
\end{exmp}

\section{Additively idempotent finite semirings}

This section is devoted to study the finite case, where both the existence of solution of linear system and the computation of the maximal solution are guaranteed precisely by the finiteness condition. We will also show how the provided algorithm to compute the solution of these systems has cryptographic consequences.  

\begin{thm}
    Let $R$ be an additively idempotent finite semiring, and let  $XA=Y$ be a system of equations, with $Y\in R^m$ and $A=(a_{i,j}) \in Mat_{n\times m} (R)$. Then, the system is compatible,  $W_i = \{ x \in R : x \cdot A_i + Y = Y \}$ is finite and 

    \begin{equation}
        X=(x_1,\dots,x_n) \text{ such that } x_i = \sum_{x \in W_i} x 
    \end{equation}

\noindent is the maximal solution of the system.
\end{thm}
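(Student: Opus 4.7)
My plan is to reduce to Theorem \ref{SolucionSistemaGeneral} by showing each $W_i$ admits a maximum equal to $\sum_{x\in W_i} x$, and then to secure the compatibility claim separately.

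Finiteness of $W_i$ is automatic since $W_i \subseteq R$ and $R$ is finite. Set $c_i := \sum_{x\in W_i} x$. The additive idempotency of $R$ turns the addition into the join of the order (3), so $c_i$ is the least upper bound of $W_i$ and therefore an upper bound of this set. To show that $c_i$ itself belongs to $W_i$, I would list $W_i = \{x_1,\dots,x_k\}$ and combine right distributivity with the freedom to duplicate $Y$ afforded by $Y + Y = Y$:
\begin{equation*}
c_i A_i + Y \;=\; \sum_{j=1}^{k} x_j A_i + \sum_{j=1}^{k} Y \;=\; \sum_{j=1}^{k}\bigl(x_j A_i + Y\bigr) \;=\; \sum_{j=1}^{k} Y \;=\; Y.
\end{equation*}
Hence $c_i = \max W_i$, which matches the constants $C_i$ appearing in Theorem \ref{SolucionSistemaGeneral}.

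With the maxima in place, that theorem already yields maximality of $Z := (c_1,\dots,c_n)$ among solutions whenever the system is compatible; the point that remains is the compatibility assertion itself. Observe that $ZA + Y = Y$ follows at once from the relations $c_i A_i + Y = Y$ by distributivity and idempotency, so $ZA \leq Y$. The delicate half is thus the reverse inequality $Y \leq ZA$, and this is the main obstacle. My intention is to establish it by a saturation argument that exploits finiteness: for each coordinate $j$, inspect the elements of $R$ that, inserted in some position $i$ of $X$, contribute $y_j$ to the $j$-th component of $XA$; any such element lies in $W_i$, so its contribution is already absorbed into $c_i A_i$, and taking the join over $i$ should force $(ZA)_j$ to recover $y_j$. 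Carrying out this step cleanly is the heart of the argument; once it is in hand, compatibility is established and Theorem \ref{SolucionSistemaGeneral} immediately identifies $Z$ as the maximal solution.
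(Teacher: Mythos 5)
Your handling of the maxima is correct and is essentially the paper's own argument: $W_i\subseteq R$ is finite because $R$ is, $c_i=\sum_{x\in W_i}x$ is an upper bound of $W_i$ by idempotency, and your computation
\begin{equation*}
c_i A_i + Y \;=\; \sum_{j=1}^{k} x_j A_i + \sum_{j=1}^{k} Y \;=\; \sum_{j=1}^{k}\bigl(x_j A_i + Y\bigr) \;=\; Y
\end{equation*}
is exactly the fact the paper records as ``$a+y=b+y=y$ implies $(a+b)+y=y$'', i.e.\ that $W_i$ is additively closed, so its join lies in $W_i$. Up to this point the two proofs coincide.

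The gap is the step you defer, and it cannot be closed: the compatibility assertion is false in general, so no saturation argument will yield $Y\leq ZA$. Concretely, let $R=\{0,1\}$ be the Boolean semiring (so $1+1=1$), $n=1$, $m=2$, $A=\begin{pmatrix}1 & 1\end{pmatrix}$, $Y=(1,0)$. Then $xA=(x,x)$ is never $(1,0)$, so the system has no solution; moreover $W_1=\{x: x+1=1,\ x+0=0\}=\{0\}$, hence $Z=(0)$ and $ZA=(0,0)\not\geq Y$. The breakdown in your sketch is that membership of $x$ in $W_i$ only guarantees contributions $x\,a_{i,j}\leq y_j$; nothing forces $y_j$ to be \emph{attained} by some $c_i a_{i,j}$, and in the example it is not. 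What the paper actually does is prove nothing about compatibility at all: its proof establishes $\max W_i=\sum_{x\in W_i}x$ and then invokes Theorem \ref{SolucionSistemaGeneral}, whose conclusion is conditional on the system having a solution; the word ``compatible'' in the statement is thus either a misstatement or an implicit hypothesis (in the cryptographic application it holds by construction, since there $A=p_a(M_1)Sq_a(M_2)$ is produced by the protocol). So your write-up is correct exactly as far as the paper's proof goes; the step you honestly flagged as missing is not a missing idea on your side but a defect of the statement, and the only way to finish is to add ``if the system has a solution'' as a hypothesis (reducing the claim to Theorem \ref{SolucionSistemaGeneral}) rather than to pursue the saturation argument.
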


\begin{proof}
To show it, it is enough to prove that the set $W_i = \{ x \in R : x A_i + Y = Y \}$ has a maximum, which is 

    \begin{equation}
         x_i = \max_{x \in W_i} x = \sum_{x \in W_i} x
    \end{equation}

    and then apply Theorem  \ref{SolucionSistemaGeneral}. 

Given that $W_i$ is finite for every $i=1,\dots ,n$, we can assert that $x_i=\sum_{x \in W_i} x$ for every $i=1,\dots ,n$ is well defined.

Now if $h \in W_i$, then $h \leq x_i$, for every $i=1,\dots ,n$, given that

\begin{equation}
     x_i + h = \sum_{x \in W_i} x + h = \sum_{x \in W_i, x \not = h} x + h + h = \sum_{x \in W_i, x \not = h} x + h =  \sum_{x \in W_i} x = x_i
\end{equation}

Finally, if $a + y = b + y = y$, then $ (a+b) + y = y$, which shows that $W_i$ is additively closed and hence $x_i = \sum_{x \in W_i} \in W_i$.
\end{proof}

\subsection{A cryptographic application}

In \cite{Rosenthal07} the authors introduce a key exchange protocol over a finite semisimple ring $R$ is introduced. The protocol is defined by the following steps. Let $M_1,S,M_2 \in Mat_n (R)$ with $R$ be public. Then Alice chooses two polynomials whose coefficients are in the center of $R$, that we can estimate of degree less or equal $m$, $p_a(x)$,$q_a(x) \in Z[R]_{m}[x]$, and Bob does the same with $p_b(x)$,$q_b(x)\in Z[R]_{m}[x]$ keeping them private.

Next, Alice and Bob exchange the values  $A=p_{a} (M_1) S q_{a}(M_2)$ and $B=p_{b} (M_1) S q_{b}(M_2)$. The shared key is given by 

\[ p_{a}(M_{1}) B q_{a}(M_{2})=p_{a}(M_{1}) p_{b}(M_{1}) S q_{b}(M_{2}) q_{a}(M_{2})=p_{b}(M_{1}) A q_{b}(M_{2})\]

If we denote by $C[A]=\{H \in Mat(R)_n: \exists p(x) \in Z[R][x] \text{ with } p(A)=H\}$, then we can get the shared key by solving the following problem:

\begin{prob} \label{problem}
Given $M_1,M_2,S, A\in Mat_{n}(R) $, with $A \in C[M_1]SC[M_2]$ , find $U\in C[M_1] $ and $V\in C[M_2]$ such that $USV=A$. 
\end{prob}

Note that  

\begin{equation}
\begin{split}
    & A= USB \text{ with } U\in C[M_1], \ V\in C[M_2] \Leftrightarrow A=p(M_1)Sq(M_2) \text{ with }p,q \in Z[R][x] \Leftrightarrow\\
    & \Leftrightarrow A=(\sum_{i=0}^n p_iM_1^i) S (\sum_{j=0}^m q_jM_2^j) \text{ with }p_i,q_i \in Z[R], n,m \in \mathbf{N} \Leftrightarrow\\
    & \Leftrightarrow A=\sum_{i=0}^n (p_iM_1^i S (\sum_{j=0}^m q_jM_2^i)) = \sum_{i=0}^n \sum_{j=0}^m p_iM_1^i S  q_jM_2^j = \sum_{i=0}^n \sum_{j=0}^m p_i q_j M_1^i S  M_2^j 
\end{split}
\end{equation}

This problem is equivalent to solve the system 

\begin{equation}
    A=\sum_{i=0}^n \sum_{j=0}^m p_i q_j M_1^i S  M_2^j 
\end{equation}.

We can observe that the solutions to this system are solutions of the following matrix system, which defines a linear system of equation over the ring $R$

\begin{equation} \label{sistema}
    A=\sum_{i=0}^n \sum_{j=0}^m d_{i,j} M_1^i S  M_2^j.
\end{equation}

Suppose that we solve the system, and we find the values $d_{i,j} \in Z[R]$, determining a solution. Then we can build the function $F[X,Y,Z] = \sum_{i=0}^{k-1} \sum_{j=0}^{k-1} d_{i,j} X^{i} Y Z^{j}$, which verifies that 

\[F[M_1,S,M_2] = \sum_{i=0}^{k-1} \sum_{j=0}^{k-1} d_{i,j} M_{1}^{i} S M_{2}^{j} = A.\]

Then
\begin{equation}\label{sharedkey}
    \begin{split}
        F[M_{1},B,M_{2}] & = \sum_{i =0}^{k-1} \sum_{j =0}^{k-1} d_{i,j} M_{1}^i B M_{2}^j = \sum_{i =0}^{k-1} \sum_{j =0}^{k-1} d_{i,j} M_{1}^i p_{b}(M_{1}) S q_{b}(M_{2}) M_{2}^j = \\ & = \sum_{i =0}^{k-1} \sum_{j =0}^{k-1} d_{i,j} p_{b}(M_{1}) M_{1}^i  S  M_{2}^j q_{b}(M_{2})= p_{b}(M_{1})\sum_{i =0}^{k-1} \sum_{j =0}^{k-1} d_{i,j} M_{1}^i S M_{2}^j  q_{b}(M_{2}) \\
& = p_{b}(M_{1}) F[M_{1},M_{2},S] q_{b}(M_{2}) =p_{b}(M_{1}) A q_{b}(M_{2}) = \\ & = p_{b}(M_{1}) p_{a}(M_{1}) S q_{a}(M_{2}) q_{b}(M_{2}) .
    \end{split}
\end{equation}

In \cite{Rosenthal07}, the authors propose the following finite simple semiring given by

\begin{center}
    \begin{tabular}{c|cccccc}
+ & 0 & 1 & 2 & 3 & 4 & 5 \\
\hline
0 & 0 & 1 & 2 & 3 & 4 & 5 \\
1 & 1 & 1 & 1 & 1 & 1 & 5 \\
2 & 2 & 1 & 2 & 1 & 2 & 5 \\
3 & 3 & 1 & 1 & 3 & 3 & 5 \\
4 & 4 & 1 & 2 & 3 & 4 & 5 \\
5 & 5 & 5 & 5 & 5 & 5 & 5 \\
\end{tabular}
\end{center}

\begin{center}
    \begin{tabular}{c|cccccc}
$\cdot$ & 0 & 1 & 2 & 3 & 4 & 5 \\
\hline
0 & 0 & 0 & 0 & 0 & 0 & 0 \\
1 & 0 & 1 & 2 & 3 & 4 & 5 \\
2 & 0 & 2 & 2 & 0 & 0 & 5 \\
3 & 0 & 3 & 4 & 3 & 4 & 3 \\
4 & 0 & 4 & 4 & 0 & 0 & 3 \\
5 & 0 & 5 & 2 & 5 & 2 & 5 \\
\end{tabular}
\end{center}

Then they proposed to use polynomials of degree 50, and $20 \times 20$ square matrices given by:

\begin{equation}
     M_1=\left( \begin{array}{cccccccccccccccccccc}
1&0&0&0&0&0&0&0&0&0&0&0&0&0&0&0&0&0&0&0\\
0&0&1&0&0&0&0&0&0&0&0&1&0&0&0&0&0&0&0&0\\
0&0&0&1&0&0&0&0&0&0&0&0&0&0&0&0&0&0&0&0\\
0&0&0&0&1&0&0&0&0&0&0&0&0&0&0&0&0&0&0&0\\
0&1&0&0&0&0&0&0&0&0&0&0&0&0&0&0&0&0&0&0\\
0&0&0&0&0&0&1&0&0&0&0&0&0&0&0&0&0&0&0&0\\
0&0&0&0&0&0&0&2&0&0&0&0&0&0&0&0&0&0&1&0\\
0&0&0&0&0&1&0&0&0&0&0&0&0&0&0&0&0&0&0&0\\
0&0&0&0&0&0&0&0&0&1&0&0&0&0&0&0&0&0&0&0\\
0&0&0&0&0&0&0&0&0&0&1&0&0&0&0&0&0&0&0&0\\
0&0&0&0&0&0&0&0&0&0&0&2&0&0&0&0&0&0&0&0\\
0&0&0&0&0&0&0&0&0&0&0&0&1&0&0&0&0&0&0&0\\
0&0&0&0&0&0&0&0&1&0&0&0&0&0&0&0&0&0&0&0\\
0&0&0&0&0&0&0&0&0&0&0&0&0&0&1&0&0&0&0&0\\
0&0&0&0&0&0&0&0&0&0&0&0&0&0&0&1&0&0&0&0\\
0&0&0&0&0&0&0&0&0&0&0&0&0&0&0&0&1&0&0&0\\
0&0&0&1&0&0&0&0&0&0&0&0&0&0&0&0&0&1&0&0\\
0&0&0&0&0&0&0&0&0&0&0&0&0&0&0&0&0&0&1&0\\
0&0&0&0&0&0&0&0&0&0&0&0&0&0&0&0&0&0&0&1\\
0&0&0&0&0&0&0&0&0&0&0&0&0&1&0&0&0&0&0&0 
\end{array} \right)
\end{equation}

\noindent and 

\begin{equation}
    M_2=\left( \begin{array}{cccccccccccccccccccc}
0&0&0&0&0&0&0&0&0&0&0&0&0&0&0&0&0&0&1&0\\
0&0&0&0&0&0&0&0&0&0&0&1&0&0&0&0&0&0&0&0\\
0&0&0&0&0&0&1&0&0&0&0&0&0&0&0&0&0&0&0&0\\
0&0&1&0&0&0&0&0&0&0&1&0&0&0&0&0&0&0&0&0\\
0&0&0&0&0&0&0&0&0&0&0&0&0&0&0&0&0&0&0&4\\
0&0&0&0&0&0&0&0&0&0&0&0&0&0&0&1&0&0&0&0\\
0&2&0&0&0&0&0&0&0&0&0&0&0&0&0&0&0&0&0&0\\
0&0&0&0&0&0&0&0&0&0&0&0&0&0&0&0&0&1&0&0\\
0&0&0&1&0&0&0&0&1&0&0&0&0&0&0&0&0&0&0&0\\
0&0&0&0&0&0&0&0&0&0&0&3&1&0&0&0&0&0&0&0\\
0&0&0&0&0&0&0&0&0&0&0&0&0&0&2&0&0&0&0&0\\
0&0&0&1&0&0&0&0&0&0&0&0&0&0&0&0&0&1&0&0\\
0&0&0&0&0&0&0&0&0&0&1&0&0&0&0&0&0&0&0&0\\
0&0&0&0&0&1&0&0&0&0&0&0&0&0&0&0&0&0&0&0\\
0&0&0&0&0&0&0&0&0&1&0&0&0&0&0&0&0&0&0&0\\
0&0&0&0&0&0&0&1&0&0&0&0&0&0&0&0&0&0&0&0\\
1&0&0&0&0&0&0&0&0&0&0&0&0&0&0&0&0&0&0&0\\
0&0&0&0&1&0&0&0&0&0&0&0&0&0&0&0&0&0&0&0\\
0&0&0&0&0&0&2&0&0&0&0&0&0&0&0&0&1&0&0&0\\
0&0&0&0&0&0&0&0&0&0&0&0&0&1&0&0&0&0&0&0 
\end{array} \right)
\end{equation}

As $S$ they define the matrix

\begin{equation}
   S= \left( \begin{array}{cccccccccccccccccccc}
0&1&0&0&0&0&0&0&0&0&0&0&0&0&0&0&0&0&0&0\\
0&0&1&0&0&0&0&0&0&0&0&1&0&0&0&0&0&0&0&0\\
1&0&0&0&0&0&0&0&0&0&0&0&0&0&0&0&0&0&0&0\\
0&0&0&1&0&1&0&0&0&0&0&0&0&0&0&0&0&0&0&0\\
0&0&0&0&0&0&0&1&0&0&0&0&0&0&0&0&1&0&0&0\\
0&0&0&0&0&0&1&0&0&0&0&0&0&0&0&0&0&0&0&0\\
0&0&0&0&0&1&0&0&0&0&0&0&0&0&0&0&0&0&0&0\\
0&1&0&0&0&0&0&1&0&0&0&0&0&0&0&1&0&0&0&0\\
0&0&0&0&0&0&0&0&0&1&0&0&0&0&0&1&0&2&0&0\\
0&0&0&0&0&0&0&0&0&0&1&0&0&0&0&0&0&0&0&0\\
0&0&0&0&0&0&0&0&0&0&0&1&0&0&0&0&0&0&0&0\\
0&0&0&0&0&0&0&0&1&0&0&0&0&0&0&0&0&0&0&0\\
0&0&0&1&0&0&0&0&0&0&0&0&1&0&0&0&0&0&0&0\\
0&0&0&0&0&0&0&0&0&0&0&0&5&1&0&0&0&0&0&0\\
0&0&1&0&0&0&0&0&0&0&0&0&0&0&1&0&0&0&0&1\\
0&0&0&1&0&0&0&0&0&0&0&0&0&0&0&1&0&0&0&0\\
0&0&0&0&0&0&0&0&0&2&0&0&0&0&0&0&0&0&0&1\\
0&0&2&0&0&0&0&0&0&0&0&0&1&0&0&0&0&0&1&0\\
0&0&0&0&1&0&0&0&0&0&0&0&0&0&0&0&0&1&0&0\\
0&0&0&1&0&0&0&0&0&0&0&0&0&0&0&0&1&0&0&0 
\end{array} \right)
\end{equation}

As a result of the protocol, the matrix $A$ is given by

\begin{equation}
 A= \left( \begin{array}{cccccccccccccccccccc}
0&1&2&2&2&0&2&4&0&2&2&2&2&4&2&4&0&2&0&0\\
1&2&1&1&2&1&1&1&1&1&1&1&1&4&2&1&1&2&1&4\\
1&2&1&1&2&1&1&1&1&1&1&1&1&4&2&1&1&2&1&4\\
1&2&1&1&2&1&1&1&1&1&1&1&1&4&2&1&1&2&1&4\\
1&2&1&1&2&1&1&1&1&1&1&1&1&4&2&1&1&2&1&4\\
1&2&2&1&1&1&2&1&0&2&2&2&5&1&2&1&1&1&1&1\\
1&2&1&1&1&1&1&2&0&2&2&1&5&1&1&2&1&1&1&1\\
1&2&1&1&1&1&1&2&0&2&2&2&1&4&1&1&1&1&1&1\\
0&2&2&2&2&4&2&4&2&2&1&1&2&0&2&0&0&2&0&0\\
0&2&2&2&2&4&2&4&2&2&2&1&2&0&2&2&0&2&0&0\\
0&2&2&2&2&0&2&0&2&2&2&2&2&0&2&2&0&2&0&0\\
0&2&2&1&2&4&2&4&0&1&1&1&1&4&2&1&0&2&0&0\\
0&2&2&2&2&4&2&4&2&1&1&1&2&4&2&1&0&2&0&0\\
1&2&1&1&1&1&1&1&1&1&1&1&5&1&1&1&1&1&1&1\\
1&2&1&1&1&1&1&1&1&2&1&1&5&1&1&1&1&1&1&1\\
1&2&1&1&1&1&1&1&1&1&1&1&1&4&1&1&1&1&1&1\\
1&2&1&1&1&1&1&1&1&1&1&1&5&1&2&1&1&1&1&1\\
1&2&1&1&1&1&1&4&0&2&2&1&5&1&1&4&1&1&1&1\\
1&2&1&1&1&1&1&1&1&2&1&2&5&1&1&1&1&1&1&1\\
1&2&1&1&1&1&1&1&0&2&2&1&5&1&1&1&1&1&1&1 
\end{array} \right)
\end{equation}

Applying the above method, we get the function 

\begin{align*}
F (X,Z,Y)= & X^{1} Z Y^{0}+X^{2} Z Y^{0}+X^{3} Z Y^{0}+X^{4} Z Y^{0}+X^{6} Z Y^{0}+X^{7} Z Y^{0}+X^{9} Z Y^{0}+\\ X^{1} Z Y^{5}+
& X^{2} Z Y^{5}+X^{3} Z Y^{5}+X^{4} Z Y^{5}+X^{6} Z Y^{5}+X^{7} Z Y^{5}+X^{9} Z Y^{5}+X^{1} Z Y^{7}+\\ X^{2} Z Y^{7}+
& X^{3} Z Y^{7}+X^{4} Z Y^{7}+X^{6} Z Y^{7}+X^{7} Z Y^{7}+X^{9} Z Y^{7}+X^{1} Z Y^{8}+X^{2} Z Y^{8}+\\ X^{3} Z Y^{8}+
& X^{4} Z Y^{8}+X^{6} Z Y^{8}+X^{7} Z Y^{8}+X^{9} Z Y^{8}+X^{1} Z Y^{9}+X^{2} Z Y^{9}+X^{3} Z Y^{9}+\\ X^{4} Z Y^{9}+
& X^{6} Z Y^{9}+X^{7} Z Y^{9}+X^{9} Z Y^{9}+X^{1} Z Y^{10}+X^{2} Z Y^{10}+X^{3} Z Y^{10}+X^{4} Z Y^{10}+\\ X^{6} Z Y^{10}+
& X^{7} Z Y^{10}+X^{9} Z Y^{10}+X^{1} Z Y^{11}+X^{2} Z Y^{11}+X^{3} Z Y^{11}+X^{4} Z Y^{11}+X^{6} Z Y^{11}+\\ X^{7} Z Y^{11}+
& X^{9} Z Y^{11}+X^{1} Z Y^{12}+X^{2} Z Y^{12}+X^{3} Z Y^{12}+X^{4} Z Y^{12}+X^{6} Z Y^{12}+X^{7} Z Y^{12}+\\ X^{9} Z Y^{12}+
& X^{1} Z Y^{13}+X^{2} Z Y^{13}+X^{3} Z Y^{13}+X^{4} Z Y^{13}+X^{6} Z Y^{13}+X^{7} Z Y^{13}+X^{9} Z Y^{13}+\\ X^{1} Z Y^{14}+
& X^{2} Z Y^{14}+X^{3} Z Y^{14}+X^{4} Z Y^{14}+X^{6} Z Y^{14}+X^{7} Z Y^{14}+X^{9} Z Y^{14}+X^{1} Z Y^{15}+\\ X^{2} Z Y^{15}+
& X^{3} Z Y^{15}+X^{4} Z Y^{15}+X^{6} Z Y^{15}+X^{7} Z Y^{15}+X^{9} Z Y^{15}+X^{1} Z Y^{16}+X^{2} Z Y^{16}+\\ X^{3} Z Y^{16}+
& X^{4} Z Y^{16}+X^{6} Z Y^{16}+X^{7} Z Y^{16}+X^{9} Z Y^{16}+X^{1} Z Y^{17}+X^{2} Z Y^{17}+X^{3} Z Y^{17}+\\ X^{4} Z Y^{17}+
& X^{6} Z Y^{17}+X^{7} Z Y^{17}+X^{9} Z Y^{17}+X^{1} Z Y^{18}+X^{2} Z Y^{18}+X^{3} Z Y^{18}+X^{4} Z Y^{18}+\\ X^{6} Z Y^{18}+
& X^{7} Z Y^{18}+X^{9} Z Y^{18}+X^{1} Z Y^{19}+X^{2} Z Y^{19}+X^{3} Z Y^{19}+X^{4} Z Y^{19}+X^{6} Z Y^{19}+\\ X^{7} Z Y^{19}+
& X^{9} Z Y^{19}+X^{1} Z Y^{20}+X^{2} Z Y^{20}+X^{3} Z Y^{20}+X^{4} Z Y^{20}+X^{6} Z Y^{20}+X^{7} Z Y^{20}+\\ X^{9} Z Y^{20}+
& X^{1} Z Y^{21}+X^{2} Z Y^{21}+X^{3} Z Y^{21}+X^{4} Z Y^{21}+X^{6} Z Y^{21}+X^{7} Z Y^{21}+X^{9} Z Y^{21}+\\ X^{1} Z Y^{22}+
& X^{2} Z Y^{22}+X^{3} Z Y^{22}+X^{4} Z Y^{22}+X^{6} Z Y^{22}+X^{7} Z Y^{22}+X^{9} Z Y^{22}+X^{1} Z Y^{23}+\\ X^{2} Z Y^{23}+
& X^{3} Z Y^{23}+X^{4} Z Y^{23}+X^{6} Z Y^{23}+X^{7} Z Y^{23}+X^{9} Z Y^{23}+X^{1} Z Y^{24}+X^{2} Z Y^{24}+\\ X^{3} Z Y^{24}+
& X^{4} Z Y^{24}+X^{6} Z Y^{24}+X^{7} Z Y^{24}+X^{9} Z Y^{24}+X^{1} Z Y^{25}+X^{2} Z Y^{25}+X^{3} Z Y^{25}+\\ X^{4} Z Y^{25}+
& X^{6} Z Y^{25}+X^{7} Z Y^{25}+X^{9} Z Y^{25}+X^{1} Z Y^{26}+X^{2} Z Y^{26}+X^{3} Z Y^{26}+X^{4} Z Y^{26}+\\ X^{6} Z Y^{26}+
& X^{7} Z Y^{26}+X^{9} Z Y^{26}+X^{1} Z Y^{27}+X^{2} Z Y^{27}+X^{3} Z Y^{27}+X^{4} Z Y^{27}+X^{6} Z Y^{27}+\\ X^{7} Z Y^{27}+
& X^{9} Z Y^{27}+X^{1} Z Y^{28}+X^{2} Z Y^{28}+X^{3} Z Y^{28}+X^{4} Z Y^{28}+X^{6} Z Y^{28}+X^{7} Z Y^{28}+\\ X^{9} Z Y^{28}+
& X^{1} Z Y^{29}+X^{2} Z Y^{29}+X^{3} Z Y^{29}+X^{4} Z Y^{29}+X^{6} Z Y^{29}+X^{7} Z Y^{29}+X^{9} Z Y^{29}+\\ X^{1} Z Y^{30}+
& X^{2} Z Y^{30}+X^{3} Z Y^{30}+X^{4} Z Y^{30}+X^{6} Z Y^{30}+X^{7} Z Y^{30}+X^{9} Z Y^{30}+X^{1} Z Y^{31}+\\ X^{2} Z Y^{31}+
& X^{3} Z Y^{31}+X^{4} Z Y^{31}+X^{6} Z Y^{31}+X^{7} Z Y^{31}+X^{9} Z Y^{31}+X^{1} Z Y^{32}+X^{2} Z Y^{32}+\\ X^{3} Z Y^{32}+
& X^{4} Z Y^{32}+X^{6} Z Y^{32}+X^{7} Z Y^{32}+X^{9} Z Y^{32}+X^{1} Z Y^{33}+X^{2} Z Y^{33}+X^{3} Z Y^{33}+\\ X^{4} Z Y^{33}+
& X^{6} Z Y^{33}+X^{7} Z Y^{33}+X^{9} Z Y^{33}+X^{1} Z Y^{34}+X^{2} Z Y^{34}+X^{3} Z Y^{34}+X^{4} Z Y^{34}+\\ X^{6} Z Y^{34}+
& X^{7} Z Y^{34}+X^{9} Z Y^{34}+X^{1} Z Y^{35}+X^{2} Z Y^{35}+X^{3} Z Y^{35}+X^{4} Z Y^{35}+X^{6} Z Y^{35}+\\ X^{7} Z Y^{35}+
& X^{9} Z Y^{35}+X^{1} Z Y^{36}+X^{2} Z Y^{36}+X^{3} Z Y^{36}+X^{4} Z Y^{36}+X^{6} Z Y^{36}+X^{7} Z Y^{36}+\\ X^{9} Z Y^{36}+
& X^{1} Z Y^{37}+X^{2} Z Y^{37}+X^{3} Z Y^{37}+X^{4} Z Y^{37}+X^{6} Z Y^{37}+X^{7} Z Y^{37}+X^{9} Z Y^{37}+\\ X^{1} Z Y^{38}+
& X^{2} Z Y^{38}+X^{3} Z Y^{38}+X^{4} Z Y^{38}+X^{6} Z Y^{38}+X^{7} Z Y^{38}+X^{9} Z Y^{38}+X^{1} Z Y^{39}+\\ X^{2} Z Y^{39}+
& X^{3} Z Y^{39}+X^{4} Z Y^{39}+X^{6} Z Y^{39}+X^{7} Z Y^{39}+X^{9} Z Y^{39}+X^{1} Z Y^{40}+X^{2} Z Y^{40}+\\ X^{3} Z Y^{40}+
& X^{4} Z Y^{40}+X^{6} Z Y^{40}+X^{7} Z Y^{40}+X^{9} Z Y^{40}+X^{1} Z Y^{41}+X^{2} Z Y^{41}+X^{3} Z Y^{41}+\\ X^{4} Z Y^{41}+
& X^{6} Z Y^{41}+X^{7} Z Y^{41}+X^{9} Z Y^{41}+X^{1} Z Y^{42}+X^{2} Z Y^{42}+X^{3} Z Y^{42}+X^{4} Z Y^{42}+\\ X^{6} Z Y^{42}+
& X^{7} Z Y^{42}+X^{9} Z Y^{42}+X^{1} Z Y^{43}+X^{2} Z Y^{43}+X^{3} Z Y^{43}+X^{4} Z Y^{43}+X^{6} Z Y^{43}+\\ X^{7} Z Y^{43}+
& X^{9} Z Y^{43}+X^{1} Z Y^{44}+X^{2} Z Y^{44}+X^{3} Z Y^{44}+X^{4} Z Y^{44}+X^{6} Z Y^{44}+X^{7} Z Y^{44}+\\ X^{9} Z Y^{44}+
& X^{1} Z Y^{45}+X^{2} Z Y^{45}+X^{3} Z Y^{45}+X^{4} Z Y^{45}+X^{6} Z Y^{45}+X^{7} Z Y^{45}+X^{9} Z Y^{45}+\\ X^{1} Z Y^{46}+
& X^{2} Z Y^{46}+X^{3} Z Y^{46}+X^{4} Z Y^{46}+X^{6} Z Y^{46}+X^{7} Z Y^{46}+X^{9} Z Y^{46}+X^{1} Z Y^{47}+\\ X^{2} Z Y^{47}+
& X^{3} Z Y^{47}+X^{4} Z Y^{47}+X^{6} Z Y^{47}+X^{7} Z Y^{47}+X^{9} Z Y^{47}+X^{1} Z Y^{48}+X^{2} Z Y^{48}+\\ X^{3} Z Y^{48}+
& X^{4} Z Y^{48}+X^{6} Z Y^{48}+X^{7} Z Y^{48}+X^{9} Z Y^{48}+X^{1} Z Y^{49}+X^{2} Z Y^{49}+X^{3} Z Y^{49}+\\ X^{4} Z Y^{49}+
& X^{6} Z Y^{49}+X^{7} Z Y^{49}+X^{9} Z Y^{49}+X^{1} Z Y^{50}+X^{2} Z Y^{50}+X^{3} Z Y^{50}+X^{4} Z Y^{50}+\\ X^{6} Z Y^{50}+
& X^{7} Z Y^{50}+X^{9} Z Y^{50}
\end{align*}

that verifies $F[M_{1},S,M_{2}]=A$ and thus, the shared key could be recovered as it is given in (\ref{sharedkey}).

\bibliographystyle{plain}
\bibliography{Article}

\end{document}